\documentclass[runningheads]{llncs}

%

\usepackage{url}            
\usepackage{booktabs}       
\usepackage{amsfonts}       
\usepackage{nicefrac}       
\usepackage{microtype}      

\usepackage{amssymb}
\usepackage{amsmath}
\usepackage{bbm}
\usepackage{caption}
\usepackage{color}
\usepackage{enumerate}
\usepackage{enumitem}
\usepackage{float}
\usepackage{url}
\usepackage{graphicx}
\usepackage{hyperref}
\usepackage{stmaryrd}
\usepackage{stmaryrd}
\usepackage{subcaption}
\usepackage{xspace}
\usepackage{float}
\usepackage[numbers]{natbib}
\usepackage{algorithm}
\usepackage[noend]{algorithmic}
\usepackage{wrapfig}
\usepackage{tikz}

\begin{document}

\title{A Framework for Transforming Specifications in Reinforcement Learning}
\titlerunning{Transforming Specifications in Reinforcement Learning}
%
\author{Rajeev Alur\and
Suguman Bansal\and
Osbert Bastani\and
Kishor Jothimurugan}
\authorrunning{R. Alur et al.}
%
\institute{University of Pennsylvania}
\maketitle              

\spnewtheorem{open}{Open Problem}{\bfseries}{\itshape}

\newcommand{\x}{\mathbf{x}}
\newcommand{\s}{\mathbf{s}}
\newcommand{\z}{\mathbf{z}}
\newcommand{\y}{\mathbf{y}}
\renewcommand{\r}{\mathbf{r}}

\newcommand{\code}[1]{{\texttt {#1}}}

\newcommand{\R}{\mathbb{R}}
\newcommand{\E}{\mathbb{E}}
\newcommand{\C}{\mathbb{C}}
\newcommand{\D}{\mathcal{D}}
\newcommand{\J}{\mathcal{J}}
\renewcommand{\P}{\mathcal{P}}
\renewcommand{\L}{\mathcal{L}}
\newcommand{\F}{\mathcal{F}}
\renewcommand{\S}{\mathbb{S}}
\newcommand{\M}{\mathcal{M}}
\newcommand{\A}{\mathcal{A}}
\newcommand{\G}{\mathcal{G}}
\newcommand{\Rc}{\mathcal{R}}
\newcommand{\N}{\mathbb{N}}
\newcommand{\p}{\phi}
\newcommand{\pv}{\varphi}
\newcommand{\traj}{\code{Runs}}
\newcommand{\safe}{\code{safe}}
\newcommand{\opt}{\code{opt}}
\newcommand{\ltl}{\textsc{ltl}}
\newcommand{\term}{\text{term}}
\newcommand{\choice}[2]{#1 \; \code{or} \; #2}
\newcommand{\reach}{\code{reach}}
\newcommand{\avoid}[1]{\code{avoid} \; #1}
\newcommand{\dist}[2]{\code{dist}(#1,#2)}
\newcommand{\semantics}[1]{{\llbracket #1 \rrbracket}}
\newcommand{\norm}[1]{{\lVert #1 \rVert}}
\newcommand{\eventually}[1]{\code{achieve} \; #1}
\newcommand{\always}[1]{~ \code{ensuring} \; #1}
\newcommand{\true}{\code{true}}
\newcommand{\false}{\code{false}}

\newcommand{\toolname}{\textsc{Spectrl}\xspace}
\newcommand{\tltl}{\textsc{Tltl}\xspace}
\newcommand{\dirl}{\textsc{DiRL}\xspace}
\newcommand{\dirlnospace}{\textsc{DiRL}}

\newcommand{\prop}{\mathcal{P}}
\newcommand{\last}{\code{last}}
\newcommand{\avg}{\text{avg}}

\newcommand{\osbert}[1]{{\color{red}{#1}}}
\renewcommand{\sb}[1]{{\color{cyan}{#1}}}
\newcommand{\jk}[1]{{\color{blue}{#1}}}

\begin{abstract}
Reactive synthesis algorithms allow automatic construction of policies to control
an environment modeled as a Markov Decision Process (MDP) that are optimal with respect to high-level
temporal logic specifications. However, they assume that the MDP model is known a priori.
Reinforcement Learning (RL) algorithms, in contrast, are designed to learn an optimal policy when
the transition probabilities of the MDP are unknown, but require the user to associate 
local rewards with transitions.
The appeal of high-level temporal logic specifications has motivated
research to develop RL algorithms for synthesis of policies from  specifications.
To understand the techniques, and nuanced variations in their theoretical guarantees, in the growing
body of resulting literature, we develop a formal framework for defining 
transformations among RL tasks with different forms of objectives. 
We define the notion of a sampling-based reduction to transform a given MDP into another one which can be simulated even when the transition probabilities of the original MDP are unknown.
We formalize the notions of preservation of optimal policies, convergence, and robustness of such reductions.
We then use our framework to restate known results, establish new results to fill in some gaps,
and identify open problems. In particular, we show that certain kinds of reductions from LTL specifications to reward-based ones do not exist, and prove the non-existence of RL algorithms with PAC-MDP guarantees for safety specifications.

\keywords{Reinforcement learning  \and Reactive synthesis \and Temporal logic.}
\end{abstract}

\section{Introduction}
\label{Sec:Intro}

In reactive synthesis for probabilistic systems, the system is typically modeled
as a (finite-state) Markov Decision Process (MDP), and the desired behavior of the system is given
as a logical specification.
For example, in robot motion planning, the model captures the physical environment in which the robot is operating and how the robot updates its position in response to the available control commands;
and the logical requirement can specify that the robot should always avoid obstacles and eventually reach 
all of the specified targets.
The synthesis algorithm then needs to compute a control policy that maximizes
the probability that an infinite execution of the system under the policy satisfies the logical specification.
There is a well developed theory of reactive synthesis for MDPs with respect to temporal logic specifications, accompanied by 
tools optimized with heuristics for improving scalability
and practical applications (for instance, see \cite{BaierAFK18} for a survey).

Reactive synthesis algorithms assume that the transition probabilities in the MDP modeling the environment
are known a priori. In many practical settings, these probabilities are not known, and 
the model needs to be learnt by exploration.
Reinforcement learning (RL) has emerged to be an effective paradigm for synthesis of control
policies in this scenario.
The optimization criterion for policy synthesis using RL is typically specified
by associating a local reward with each transition and aggregating the sequence of local rewards 
along an infinite execution using discounted-sum or limit-average operators.
Since an RL algorithm is learning from samples, it is expected to compute a sequence of
approximations to the optimal policy with guaranteed convergence.
Furthermore, ideally, the algorithm should have a PAC (Probably Approximately Correct) guarantee
regarding the number of samples needed to ensure that
the value of the policy it computes is within a specified bound of that of an optimal policy
with a probability greater than a specified threshold.
RL algorithms with convergence and efficient PAC guarantees are known for discounted-sum rewards \cite{watkins1992q,strehl2006pac}.


A key shortcoming of RL algorithms is that the user must manually encode the desired behavior by associating
rewards with system transitions.
An appealing alternative is to instead have the user provide a high-level logical specification encoding the task.
First, it is more natural to specify the desired properties of the global behavior, such
as ``always avoid obstacles and reach targets in this specified order'', in 
a logical formalism such as temporal logic.
Second, logical specifications facilitate testing and verifiability since it can
be checked independently whether the synthesized policy satisfies the logical requirement.
Finally, we can assume that the learning algorithm knows the logical specification
in advance, unlike the local rewards learnt during model exploration, thereby 
opening up the possibility of design of specification-aware learning algorithms.
For example, the structure of a temporal-logic formula can be exploited 
for hierarchical and compositional learning to reduce sample complexity in practice~\cite{icarte2020reward,jothimurugan2021compositional}.

This has motivated many researchers to design RL algorithms for logical specifications~\cite{aksaray2016q, brafman2018ltlf,de2019foundations, hasanbeig2018logically, littman2017environmentindependent, hasanbeig2019, yuan2019modular, moritz2019, ijcai2019-0557, jiang2020temporallogicbased,li2017reinforcement,icarte2018using,jothimurugan2022specification}.
The natural approach is to (1) translate the logical specification to an automaton
that accepts executions that satisfy the specification,
(2) define an MDP that is the product of the MDP being controlled and the specification
automaton, (3) associate rewards with the transitions of the product MDP so that
either discounted-sum or limit-average aggregation (roughly) captures
acceptance by the automaton, and (4) apply an off-the-shelf RL algorithm such as Q-learning to
synthesize the optimal policy.
While this approach is typical to most papers in this rapidly growing body of research,
and many of the proposed techniques have been shown to work well empirically,
there are many nuanced variations in terms of their theoretical guarantees.
Instead of attempting to systematically survey the literature, we classify it in following broad categories:
some only consider finite executions with a known time horizon \cite{aksaray2016q};
some provide convergence guarantees only when the optimal policy satisfies the specification almost surely; some provide PAC guarantees only when certain properties regarding the transition probabilities of the MDP are known~\cite{daca2017faster, ashok2019pac, fu2014probably};
some include a parameterized reduction \cite{bozkurt2020control,moritz2019}---the parameter being the discount factor, for instance,
establishing correctness for some value of the parameter without specifying how to compute it.
The bottom line is that there are no known RL algorithms
with convergence and/or PAC guarantees
to synthesize a policy to maximize the satisfaction of a temporal logic specification (or an impossibility result that such algorithms cannot exist).

In this paper, we propose a formal framework for defining transformations among RL tasks. 
We define an RL task to consist of an MDP ${\cal M}$ together with a specification $\phi$ for the desired policy.
The MDP is given by its states, actions, a function
to reset the state to its initial state, and a step-function that allows sampling of its transitions.
Possible forms of specifications include transition-based rewards to
be aggregated as discounted-sum or limit-average, reward machines \cite{icarte2018using}, safety, reachability,
and linear temporal logic formulas.
We then define {\em sampling-based reduction} to formalize transforming one
RL task $({\cal M}, \phi)$ to another $(\bar{\M}, \phi')$.
While the relationship between the transformed model $\bar{\M}$ and the original model ${\M}$ is
inspired by the classical definitions of simulation maps over (probabilistic) transition systems,
the main challenge is that the transition probabilities of $\bar{\M}$ cannot be directly defined
in terms of the unknown transition probabilities of ${\cal M}$. Intuitively, the
step-function to sample transitions of $\bar{\M}$ should be definable in terms 
of the step-function of ${\cal M}$ used as a black-box, and our formalization allows this.

The notion of reduction among RL tasks naturally leads to formalization of preservation of optimal policies,
convergence, and robustness (that is, policies close to optimal in one get mapped to ones close
to optimal in the other).
We use this framework to revisit existing results, fill in some gaps, and identify open problems.

{We begin with preliminaries in Section~\ref{Sec:Prelims} followed by a discussion of various kinds of specifications in Section~\ref{sec:spec_sec}. In Section~\ref{sec:reductions}, we show that it is not possible to reduce all LTL specifications to (discounted-sum) reward machines (which are reward functions with an internal state) when the underlying MDP $\M$ is kept fixed. We then define the notion of sampling-based reduction and restate existing results using our framework. In Section~\ref{sec:robust}, we introduce the notions of robust specifications and robust reductions, and show that robust sampling-based reductions do not exit for transforming safety (as well as reachability) specifications to discounted rewards. Finally, we present our result on non-existence of RL algorithms with PAC-MDP guarantees for safety (and reachability) specifications in Section~\ref{sec:no_pac}.}

\subsection*{Related Work}
{In Section~\ref{sec:existing}, we discuss some existing work on reducing LTL specifications to rewards. There is work on similar reductions for more complex lexicographic $\omega$-regular objectives \cite{hahn2021model} as well as in the context of stochastic games \cite{hahn2020model}. We discuss existing PAC learning results for logical specifications in Section~\ref{sec:pac_result}. Concurrent to our work, the authors of \cite{yang2021reinforcement} show that PAC-MDP algorithms do not exist for any \emph{non-finitary} LTL objective.

Closely related to this work is the work on expressivity of discounted rewards \cite{abel2021expressivity} which studies whether certain kinds of tasks can be encoded using discounted rewards. There are a couple of key differences to our work. First, they do not consider reductions that involve modifying the underlying MDP $\M$. Second, the tasks considered are based on explicit orderings among policies or trajectories rather than succinct formal specifications.}
\section{Preliminaries}
\label{Sec:Prelims}

\paragraph{Markov Decision Process.}

A {Markov Decision Process} (MDP) is a tuple $\M = (S, A, s_0, P)$, where $S$ is a finite set of states,
$s_0$ is the initial state,\footnote{A distribution $\eta$ over initial states can be modeled by adding a new state $s_0$ from which taking any action leads to a state sampled from $\eta$.} $A$ is a finite set of actions, and $P:S \times A \times S \to [0,1]$ is the transition probability function, with $\sum_{s'\in S}P(s,a,s') = 1$ for all $s\in S$ and $a\in A$.


An \emph{infinite run} $\zeta \in (S\times A)^{\omega}$ is a sequence $\zeta = s_0{a_0}s_1{a_1}\ldots$, where $s_i \in S$ and $a_i\in A$ for all $i\in\N$. Similarly, a \emph{finite run} $\zeta\in(S\times A)^*\times S$ is a finite sequence $\zeta = s_0{a_0}s_1{a_1}\ldots a_{t-1}s_t$. For any run $\zeta$ of length at least $j$ and any $i\leq j$, we let
$\zeta_{i:j}$ denote the subsequence $s_i{a_i}s_{i+1}{a_{i+1}}\ldots{a_{j-1}}s_j$. We use $\traj(S,A) = (S\times A)^{\omega}$ and $\traj_f(S,A) = (S\times A)^*\times S$ to denote the set of infinite and finite runs, respectively. 

Let $\D(A) = \{\Delta:A\to[0,1]\mid\sum_{a\in A}\Delta(a) = 1\}$ denote the set of all distributions over actions. A policy $\pi:\traj_f(S,A)\to\D(A)$ maps a finite run $\zeta\in\traj_f(S,A)$ to a distribution $\pi(\zeta)$ over actions. We denote by $\Pi(S,A)$ the set of all such policies. A policy $\pi$ is \emph{positional} if $\pi(\zeta) = \pi(\zeta')$ for all $\zeta, \zeta'\in\traj_f(S,A)$ with $\last(\zeta) = \last(\zeta')$ where $\last(\zeta)$ denotes the last state in the run $\zeta$. A policy $\pi$ is deterministic if, for all finite runs $\zeta\in\traj_f(S,A)$, there is an action $a\in A$ with $\pi(\zeta)(a) = 1$.

Given a finite run $\zeta=s_0a_0\ldots a_{t-1}s_t$, the \emph{cylinder} of $\zeta$, denoted by $\code{Cyl}(\zeta)$, is the set of all infinite runs starting with prefix $\zeta$. Given an MDP $\M$ and a policy $\pi\in\Pi(S,A)$, we define the probability of the cylinder set by $\D^{\M}_{\pi}(\code{Cyl}(\zeta)) = \prod_{i=0}^{t-1}\pi(\zeta_{0:i})(a_i)P(s_i, a_i, s_{i+1})$. It is known that $\D_{\pi}^{\M}$ can be uniquely extended to a probability measure over the $\sigma$-algebra generated by all cylinder sets.

\paragraph{Simulator.} 

In reinforcement learning, the standard assumption is that the set of states $S$, the set of actions $A$, and the initial state $s_0$ are known but the transition probability function $P$ is unknown. The learning algorithm has access to a simulator $\S$ which can be used to sample runs of the system $\zeta\sim\D_{\pi}^{\M}$ using any policy $\pi$. The simulator can also be the real system, such as a robot, that $\M$ represents. Internally, the simulator stores the current state of the MDP which is denoted by $\S.\code{state}$. It makes the following functions available to the learning algorithm.

\begin{description}
\item[\rm $\S.\code{reset()}$:] This function sets $\S.\code{state}$ to the initial state $s_0$.

\item[\rm $\S.\code{step(}a\code{)}$:] Given as input an action $a$, this function samples a state $s'\in S$ according to the transition probability function $P$---i.e., the probability that a state $s'$ is sampled is $P(s,a,s')$ where $s=\S.\code{state}$. It then updates $\S.\code{state}$ to the newly sampled state $s'$ and returns $s'$.


\end{description}
Simulation models without the \code{reset()} function have also been studied~\cite{kearns2002near, strehl2006pac}. In this paper, we allow resets, however, we believe that our results also apply to settings in which resets are not allowed.



\section{Task Specification}\label{sec:spec_sec}
In this section, we present many different ways in which one can specify the objective of the learning algorithm. We define a \emph{reinforcement learning task} to be a pair ($\M$, $\p$) where $\M$ is an MDP and $\p$ is a specification for $\M$. In general, a specification $\p$ for $\M = (S,A,s_0,P)$ defines a function $J^{\M}_{\p}:\Pi(S,A)\to\R$ and the reinforcement learning objective is to compute a policy $\pi$ that maximizes $J^{\M}_{\p}(\pi)$. Let $\J^*(\M,\p) = \sup_{\pi}J^{\M}_{\p}(\pi)$ denote the maximum value of $J^{\M}_{\p}$. We let $\Pi_{\opt}(\M,\p)$ denote the set of all optimal policies in $\M$ w.r.t. $\p$---i.e., $\Pi_{\opt}(\M,\p) = \{\pi\mid J^{\M}_{\p}(\pi)=\J^*(\M,\p)\}$. In many cases, it is sufficient to compute an $\varepsilon$-optimal policy $\tilde{\pi}$ with $J_{\p}^{\M}(\tilde{\pi})\geq \J^*(\M,\p)-\varepsilon$; we let $\Pi_{\opt}^{\varepsilon}(\M,\p)$ denote the set of all $\varepsilon$-optimal policies in $\M$ w.r.t. $\p$.  

\subsection{Rewards}

The most common kind of specifications used in reinforcement learning is reward functions that map transitions in $\M$ to real values. {We first define the more general \emph{reward machines} and then define standard transition-based reward functions as a special case.}


\paragraph{Reward Machines.} Reward Machines \cite{icarte2018using} extend simple transition-based reward functions to history-dependent ones by using an automaton model. Formally, a reward machine for an MDP $\M = (S,A,s_0,P)$ is a tuple $\Rc = (U, u_0, \delta_u, \delta_r)$, where $U$ is a finite set of states, $u_0$ is the initial state, $\delta_u: U\times S\to U$ is the state transition function, and $\delta_r: U\to[S\times A\times S\to\R]$ is the reward function. Given an infinite run $\zeta=s_0{a_0}s_1{a_1}\ldots$, we can construct an infinite sequence of reward machine states $\rho_{\Rc}(\zeta) = u_0u_1,\ldots$ defined by $u_{i+1} = \delta_u(u_i, s_{i+1})$. Then, we can assign either a discounted-sum or a limit-average reward to $\zeta$:
\begin{itemize}
    \item \emph{Discounted Sum.} Given a discount factor $\gamma\in]0,1[$, the full specification is $\p = (\Rc, \gamma)$ and we have $$\Rc_{\gamma}(\zeta) = \sum_{i=0}^{\infty}\gamma^i \delta_r(u_i)(s_i, a_i, s_{i+1}).$$
    Though less standard, one can use different discount factors in different states of $\M$, in which case we have $\gamma:S\to ]0,1[$ and
    $$\Rc_{\gamma}(\zeta) = \sum_{i=0}^{\infty}\Big(\prod_{j=0}^{i-1}\gamma(s_j)\Big) \delta_r(u_i)(s_i, a_i, s_{i+1}).$$
    The value of a policy $\pi$ is $J^{\M}_{\p}(\pi) = \E_{\zeta\sim\D^{\M}_{\pi}}[\Rc_{\gamma}(\zeta)]$. 
    \item \emph{Limit Average.} The specification is just a reward machine $\p = \Rc$. The $t$-step average reward of the run $\zeta$ is  $$\Rc_{\avg}^t(\zeta) = \frac{1}{t}\sum_{i=0}^{t-1}\delta_r(u_i)(s_i, a_i, s_{i+1}).$$
    The value of a policy $\pi$ is $J^{\M}_{\p}(\pi) = \liminf_{t\to\infty}\E_{\zeta\sim\D^{\M}_{\pi}}[\Rc_{\avg}^t(\zeta)]$.
\end{itemize}
{A standard transition-based reward function $R$ is simply a reward machine $\Rc$ with a single state $u_0$; in this case, we use $R(s, a, s')$ to denote $\delta_r(u_0)(s,a,s')$.}

\subsection{Abstract Specifications}
The above specifications are defined w.r.t. a given set of states $S$ and actions $A$, and can only be interpreted over MDPs with the same state and action spaces. In this section, we look at \emph{abstract specifications}, which are defined independently of $S$ and $A$. To achieve this, a common assumption is that there is a fixed set of propositions $\prop$, and the simulator provides access to a labeling function $L:S\to2^{\prop}$ denoting which propositions are true in any given state. Given a run $\zeta=s_0a_0s_1a_1\ldots$, we let $L(\zeta)$ denote the corresponding sequence of labels $L(\zeta) = L(s_0)L(s_1)\ldots$. A \emph{labeled MDP} is a tuple $\M = (S,A,s_0,P,L)$. WLOG, we only consider labeled MDPs in the rest of the paper.

\paragraph{Abstract Reward Machines.} Reward machines can be adapted to the abstract setting quite naturally. An \emph{abstract reward machine} (ARM) is similar to a reward machine except $\delta_u$ and $\delta_r$ are independent of $S$ and $A$---i.e., $\delta_u:U\times 2^{\prop}\to U$ and $\delta_r:U\to[2^{\prop}\to\R]$. Given current ARM state $u_i$ and next MDP state $s_{i+1}$, the next ARM state is given by $u_{i+1} = \delta_u(u_i, L(s_{i+1}))$, and the reward is given by $\delta_r(u_i)(L(s_{i+1}))$.

\paragraph{Languages.} Formal languages can be used to specify qualitative properties about runs of the system. A language specification $\p = \L\subseteq (2^{\prop})^{\omega}$ is a set of ``desirable" sequences of labels. The value of a policy $\pi$ is the probability of generating a sequence in $\L$---i.e.,
$$J^{\M}_{\p}(\pi) =  \D^{\M}_{\pi}\big(\{\zeta\in\traj(S,A)\mid L(\zeta)\in \L\}\big).$$
Some common ways to define languages are as follows.
\begin{itemize}
\item \emph{Reachability.} Given an accepting set of propositions $X\in 2^{\prop}$, we have $\L_{\reach}(X) = \{w\in(2^{\prop})^{\omega}\mid \exists i.\ w_i \cap X\neq\emptyset\}$.
\item \emph{Safety.} Given a safe set of propositions $X\in 2^{\prop}$, we have $\L_{\safe}(X) = \{w\in(2^{\prop})^{\omega}\mid \forall i.\ w_i \subseteq X\}$.
\item \emph{Linear Temporal Logic.} Linear Temporal Logic \cite{pnueli1977temporal} over propositions $\prop$ is defined by the grammar
$$\varphi:= b\in\prop\ |\ \varphi\lor\varphi\ |\ \lnot\varphi\ |\ \bigcirc\varphi\ |\ \varphi\ \mathcal{U}\ \varphi$$
where $\bigcirc$ denotes the ``Next" operator and $\mathcal{U}$ denotes the ``Until" operator. We refer the reader to \cite{sistla1985complexity} for more details on the semantics of LTL specifications. We use $\Diamond$ and $\Box$ to denote the derived ``Eventually" and ``Always" operators, respectively.
Given an LTL specification $\varphi$ over propositions $\prop$, we have $\L_{\ltl}(\varphi) = \{w\in(2^{\prop})^{\omega}\mid w\models \varphi\}$.
\end{itemize}

\subsection{Learning Algorithms}
\label{sec:learningalgos}

A learning algorithm $\A$ is an iterative process that in each iteration (i) either resets the simulator or takes a step in $\M$, and (ii) outputs its current estimate of an optimal policy $\pi$. A learning algorithm $\A$ induces a random sequence of output policies $\{\pi_n\}_{n=1}^{\infty}$ where $\pi_n$ is the policy output in the $n^{\text{th}}$ iteration. We consider two common kinds of learning algorithms. First, we consider algorithms that converge in the limit almost surely.

\begin{definition}
A learning algorithm $\A$ is said to converge in the limit for a class of specifications $\C$ if, for any RL task ($\M,\p$) with $\p\in\C$,
$$J_{\p}^{\M}(\pi_n) \to \J^*(\M,\p)\ \text{as}\ n\to\infty\quad \text{almost surely.}$$
\end{definition}

Q-learning \citep{watkins1992q} is an example of a learning algorithm that converges in the limit for discounted-sum rewards. There are variants of Q-learning for limit-average rewards \citep{abounadi2001learning} which have been shown to converge in the limit under some assumptions on the MDP $\M$. The second kind of algorithms is \emph{Probably Approximately Correct} (PAC-MDP) \citep{kakade2003sample} algorithms which are defined as follows.

\begin{definition}\label{def:pac-mdp}
A learning algorithm $\A$ is said to be PAC-MDP for a class of specifications $\C$ if, there is a function $h$ such that for any $p>0$, $\varepsilon>0$, and any RL task $(\M,\p)$ with $\M=(S,A,s_0,P)$ and $\p\in\C$, taking $N=h(|S|,|A|,|\p|,\frac{1}{p}, \frac{1}{\varepsilon})$, with probability at least $1-p$, we have
$$\Big|\Big\{n\mid \pi_n\notin\Pi_{\opt}^{\varepsilon}(\M,\p) \Big\}\Big|\leq N.$$
\end{definition}

We say a PAC-MDP algorithm is \emph{efficient} if the \emph{sample complexity} function $h$ is polynomial in $|S|, |A|, \frac{1}{p}$ {and} $\frac{1}{\varepsilon}$. There are efficient PAC-MDP algorithms for discounted-sum rewards \citep{kearns2002near,strehl2006pac}.

\section{Reductions}\label{sec:reductions}

There has been a lot of research on RL algorithms for reward-based specifications.
The most common approach for language-based specifications is to transform the given specification into a reward function and apply algorithms that maximize the expected reward. In such cases, it is important to ensure that maximizing the expected reward corresponds to maximizing the probability of satisfying the specification. In this section, we study such reductions and formalize a general notion of \emph{sampling-based reductions} in the RL setting---i.e., the transition probabilities are unknown and only a simulator of $\M$ is available. 

\subsection{Specification Translations}

We first consider the simplest form of reduction, which involves translating the given specification into another one. Given a specification $\p$ for MDP $\M=(S, A, s_0, P, L)$ we want to construct another specification $\p'$ such that for any $\pi\in\Pi_{\opt}(\M,\p')$, we also have $\pi\in\Pi_{\opt}(\M,\p)$. This ensures that $\p'$ can be used to compute a policy that maximizes the objective of $\p$. Note that since the transition probabilities $P$ are not known, the translation has to be independent of $P$ and furthermore the above \emph{optimality preservation} criterion must hold for all $P$.

\begin{definition}
An optimality preserving specification translation is a computable function $\F$ that maps the tuple $(S,A, s_0, L, \p)$ to a specification $\p'$ such that for all transition probability functions $P$, letting $\M = (S, A, s_0, P, L)$, we have $\Pi_{\opt}(\M,\p')\subseteq\Pi_{\opt}(\M,\p)$.
\end{definition}

A first attempt at a reinforcement learning algorithm for language-based specifications is to translate the given specification to a reward machine (either discounted-sum or limit-average). However there are some limitations to this approach. First, we show that it is not possible to reduce reachability and safety objectives to reward machines with discounted rewards.

\begin{theorem}\label{thm:discount}
Let $\prop = \{b\}$ and $\p = \L_{\reach}(\{b\})$. There exists $S$, $A$, $s_0$, $L$ such that for any discounted-sum reward machine specification $\p' = (\Rc, \gamma)$, there is a transition probability function $P$ such that for $\M = (S, A, s_0, P, L)$, we have $\Pi_{\opt}(\M,\p')\not\subseteq\Pi_{\opt}(\M,\p)$.
\end{theorem}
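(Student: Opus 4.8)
\noindent The plan is to fix a small ``skeleton'' --- a state set $S$, action set $A$, initial state $s_0$, and labeling $L$ --- and then, given an \emph{arbitrary} discounted-sum reward machine $\p'=(\Rc,\gamma)$, to build a transition function $P$ tailored to $(\Rc,\gamma)$ that makes reward-optimal behaviour in $\M=(S,A,s_0,P,L)$ reachability-suboptimal. A workable skeleton is $S=\{s_0,s_{\mathrm t},s_{\mathrm d},s_g\}$ with $L(s_g)=\{b\}$ and $L$ empty elsewhere, $A=\{\alpha,\beta\}$, $s_0$ initial; $s_g$ (the goal) and $s_{\mathrm t}$ (a ``trap'') will always be made absorbing, $s_{\mathrm d}$ will be a ``detour'' state carrying a self-loop with some escape probability $q$, and the only real freedom the adversary keeps is where $\alpha,\beta$ lead from $s_0$ and the value of $q$. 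The conceptual engine is that a discount factor strictly below $1$ forces the reward machine to be time-sensitive, whereas the reachability value $\D^{\M}_{\pi}(\{\zeta:L(\zeta)\in\L_{\reach}(\{b\})\})$ is completely time-insensitive and also indifferent to everything that happens after --- or instead of --- reaching $b$.

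First I would, from $(\Rc,\gamma)$, solve finite Bellman-type systems to obtain the optimal continuation values $G(u)$ and $T(u)$ from $(s_g,u)$ and $(s_{\mathrm t},u)$ once those states are absorbing, and the optimal value $\Phi_q(u)$ of the policy that loops at $s_{\mathrm d}$ until it escapes to $s_g$; all of these are determined by $\Rc$ and $\gamma$ alone. I then split on whether $\Rc$ effectively ``prefers'' staying in the $b$-free region to reaching $b$. In the hostile-to-$b$ case I route $\beta$ straight into $s_{\mathrm t}$ and let $\alpha$ reach $s_g$ with probability $p\in(0,1)$ and fall into $s_{\mathrm t}$ otherwise; by the case hypothesis the always-$\beta$ policy has reward value at least that of the always-$\alpha$ policy, hence (by the affineness remark below) is reward-optimal, yet it reaches $b$ with probability $0<p$ and is not reachability-optimal. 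In the favorable-to-$b$ case I route $\beta$ into $s_{\mathrm d}$, which escapes to $s_g$ with small probability $q$ (so the always-$\beta$ policy reaches $b$ with probability $1$ but very slowly), while $\alpha$ reaches $s_g$ with probability $p$ close to $1$ and falls into $s_{\mathrm t}$ otherwise. As $q\to 0$ the reward value of the always-$\beta$ policy tends to the value of lingering at $s_{\mathrm d}$ forever, which the case hypothesis makes strictly below the value of the fast $\alpha$-route as $p\to 1$; so for $p$ close enough to $1$ and $q$ small enough the always-$\alpha$ policy is reward-optimal while reaching $b$ only with probability $p<1$. In both cases the only genuine decision point, the single visit to $s_0$, is reached exactly once and everything downstream lies inside a fixed absorbing or self-looping gadget whose optimal continuation value has already been pinned down, so every policy's reward value is an \emph{affine} function of the one mixing probability $\theta$ of playing $\alpha$ at $s_0$; consequently no history-dependent or randomized policy can beat the intended optimum, and $\Pi_{\opt}(\M,\p')\not\subseteq\Pi_{\opt}(\M,\p)$ in each case.

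The hard part will be making this dichotomy genuinely exhaustive once the full generality of reward machines is taken into account: $\delta_r$ can place arbitrary rewards on individual transitions --- in particular on the edges into $s_{\mathrm t}$, $s_{\mathrm d}$, $s_g$ and on the gadget self-loops --- and $\gamma$ may be state-dependent, so the various continuation values are not symmetric and the crude split ``reaching $b$ versus not'' fragments into sub-cases (for instance, the reward machine may value lingering at $s_{\mathrm d}$ more than any route to $b$ that the adversary is willing to offer). Closing this will require exploiting the freedom to rename $\alpha$ and $\beta$, to choose which $b$-free state plays the trap and which the detour, and to tune $p$ and $q$ numerically after $(\Rc,\gamma)$ is revealed; I expect that endowing the skeleton with a handful of extra $b$-free states --- so the adversary can route the ``slow'' path through whichever states the reward machine values least --- together with the observation that a sufficiently long $b$-free detour always drives the value of a reaching policy down towards the best $b$-free value (because the factor $\gamma^t$ kills the eventual $b$-reward) is enough to reduce every instance to one of the two cases above. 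The continuity of $\Phi_q$ in $q$ --- it is the fixed point of a contraction of modulus $(1-q)\gamma(s_{\mathrm d})$ --- is the technical lemma underpinning the limiting arguments.
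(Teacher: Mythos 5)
Your counterexample is structurally the same as the paper's: a single decision at $s_0$ between a fast-but-risky route to the $b$-labelled state and a slow-but-sure detour through a $b$-free self-loop, together with the observation that after $t$ detour steps the residual discounted reward is at most $\gamma^t/(1-\gamma)$, so a long enough detour pins the value of any eventually-reaching trajectory down to essentially the value of looping forever. The affineness-in-$\theta$ remark (so that only the two deterministic policies matter) and the continuity of $\Phi_q$ at $q=0$ are both sound. So the engine of the argument is right.

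The genuine gap is the one you flag yourself: the dichotomy on whether $\Rc$ ``prefers'' the $b$-free region is never made precise, and the completion you sketch (extra $b$-free states, renaming actions, numerical tuning after seeing $\Rc$) is exactly where the difficulty lives. The per-edge rewards can be rigged so that neither case hypothesis holds in the form you need: for instance, a reward machine that pays for the transition $(s_0,\alpha,\cdot)$ itself, independently of where it lands, defeats your hostile-case construction (always-$\alpha$ beats always-$\beta$ for every $p$) while not being ``favorable to $b$'' in any intrinsic sense, so your classification must be stated in terms of specific continuation values $G,T,\Phi_q$ \emph{and} entry rewards on specific edges, and exhaustiveness is no longer obvious. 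The paper's proof shows that no classification of $\Rc$ is needed at all: since the same $\p'$ must preserve optimality for \emph{every} $P$ on the fixed skeleton, it must in particular do so for the fully deterministic instance ($p_1=p_2=p_3=1$), where the detour policy reaches $b$ with probability $0$ and the direct policy with probability $1$; this forces a strict gap $\Rc_{\gamma}(s_0a_1(s_1a_1)^{\omega})\geq\Rc_{\gamma}(s_0a_2(s_2a_1)^{\omega})+\varepsilon$ for some $\varepsilon>0$ \emph{before} any perturbation is made. That single inequality is then degraded by choosing $t$ with $\gamma^t/(1-\gamma)\leq\varepsilon/2$, a loop probability $p_3<1$ with $1-p_3^t$ small, and a goal probability $p_1<1$ close to $1$. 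Replacing your dichotomy by this ``interrogate the deterministic instance first'' step collapses all of the sub-cases you worry about and turns your plan into a complete proof; as written, the case analysis remains the unproven heart of the argument.
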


{The main idea behind the proof is that one can make the transition probabilities small enough so that the expected time taken to reach the goal is large while maintaining an optimal probability of 1 for eventually reaching the goal. Using this idea, it is possible to define transition probabilities such that the expected reward w.r.t. an optimal policy is smaller than the expected reward obtained by a suboptimal policy.}
\begin{figure}
\begin{center}
\begin{tikzpicture}[scale=0.14]
\tikzstyle{every node}+=[inner sep=0pt]
\draw [black] (19.1,-27) circle (3);
\draw (19.1,-27) node {$s_0$};
\draw [black] (40.6,-13.1) circle (3);
\draw (40.6,-13.1) node {$s_1$};
\draw [black] (40.6,-13.1) circle (2.4);
\draw [black] (61.7,-27) circle (3);
\draw (61.7,-27) node {$s_3$};
\draw [black] (40.6,-27) circle (3);
\draw (40.6,-27) node {$s_2$};
\draw [black] (20.544,-24.373) arc (147.65954:98.10674:24.248);
\fill [black] (37.61,-13.34) -- (36.75,-12.96) -- (36.89,-13.95);
\draw (24.8,-16.48) node [above] {$a_1\mbox{ }/\mbox{ }p_1$};
\draw [black] (59.916,-29.41) arc (-39.88125:-140.11875:25.433);
\fill [black] (59.92,-29.41) -- (59.02,-29.7) -- (59.79,-30.34);
\draw (40.4,-39.04) node [below] {$a_1\mbox{ }/\mbox{ }1-p_1$};
\draw [black] (59.258,-28.741) arc (-56.9947:-123.0053:34.62);
\fill [black] (59.26,-28.74) -- (58.31,-28.76) -- (58.86,-29.6);
\draw (40.4,-34.83) node [below] {$a_2\mbox{ }/\mbox{ }1-p_2$};
\draw [black] (43.28,-11.777) arc (144:-144:2.25);
\draw (47.85,-13.1) node [right] {$a_1\mbox{ }/\mbox{ }1$};
\fill [black] (43.28,-14.42) -- (43.63,-15.3) -- (44.22,-14.49);
\draw [black] (22.1,-27) -- (37.6,-27);
\fill [black] (37.6,-27) -- (36.8,-26.5) -- (36.8,-27.5);
\draw (29.85,-26.5) node [above] {$a_2\mbox{ }/\mbox{ }p_2$};
\draw [black] (64.38,-25.677) arc (144:-144:2.25);
\draw (68.95,-27) node [right] {$a_1\mbox{ }/\mbox{ }1$};
\fill [black] (64.38,-28.32) -- (64.73,-29.2) -- (65.32,-28.39);
\draw [black] (43.095,-25.355) arc (151.12502:-136.87498:2.25);
\draw (47.97,-26.03) node [right] {$a_1\mbox{ }/\mbox{ }p_3$};
\fill [black] (43.42,-27.98) -- (43.88,-28.8) -- (44.37,-27.93);
\draw [black] (40.6,-24) -- (40.6,-16.1);
\fill [black] (40.6,-16.1) -- (40.1,-16.9) -- (41.1,-16.9);
\draw (41.1,-20.05) node [right] {$a_1\mbox{ }/\mbox{ }1-p_3$};
\end{tikzpicture}
\end{center}

\caption{Counterexample for reducing reach specification to a discounted RM.}
\label{fig:reach_mdp}
\end{figure}
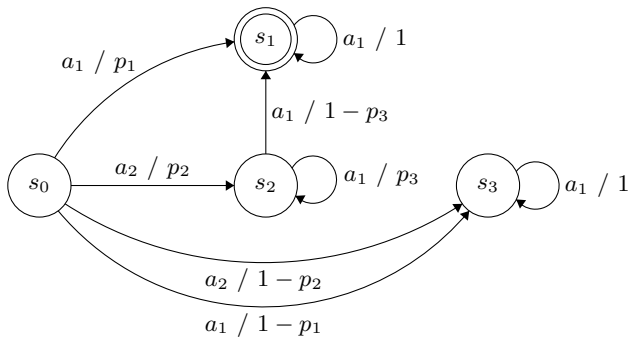

\begin{proof}
Consider the MDP in Figure~\ref{fig:reach_mdp}, which has states $S = \{s_0, s_1, s_2, s_3\}$, actions $A = \{a_1, a_2\}$, and labeling function $L$ given by $L(s_1) = \{b\}$ (marked with double circles) and $L(s_0) = L(s_2) = L(s_3) = \emptyset$. Each edge denotes a state transition and is labeled by an action followed by the transition probability; the latter are parameterized by $p_1, p_2$, and $p_3$. At states $s_1, s_2$, and $s_3$ the only action available is $a_1$.\footnote{This can be modeled by adding an additional dead state that is reached upon taking action $a_2$ in these states.} There are only two deterministic policies $\pi_1$ and $\pi_2$ in $\M$; $\pi_1$ always chooses $a_1$, whereas $\pi_2$ first chooses $a_2$ followed by $a_1$ afterwards.

For the sake of contradiction, suppose there is a $\p' = (\Rc, \gamma)$ that preserves optimality w.r.t. $\p$ for all values of $p_1$, $p_2$, and $p_3$. WLOG, we assume that the rewards are normalized---i.e. $\delta_r:U\to [S\times A\times S\to [0,1]]$. If $p_1 = p_2 = p_3 = 1$, then taking action $a_1$ in $s_0$ achieves reach probability of $1$, whereas taking action $a_2$ in $s_0$ leads to a reach probability of $0$. Hence, we must have that $\Rc_{\gamma}(s_0a_1(s_1a_1)^{\omega}) \geq \Rc_{\gamma}(s_0a_2(s_2a_1)^{\omega}) + \varepsilon$ for some $\varepsilon > 0$, as otherwise, $\pi_2$ maximizes $J_{\p'}^{\M}$ but does not maximize $J_{\p}^{\M}$.


For any finite run $\zeta\in\traj_f(S,A)$, let $\Rc_{\gamma}(\zeta)$ denote the finite discounted-sum reward of $\zeta$.
Let $t$ be such that $\frac{\gamma^{t}}{1-\gamma} \leq \frac{\varepsilon}{2}$. Then, for any $\zeta\in \traj(S,A)$, we have
\begin{align*}
\Rc_{\gamma}(s_0a_1(s_1a_1)^{\omega}) &\geq \Rc_{\gamma}(s_0a_2(s_2a_1)^{\omega}) + \varepsilon\\
&\geq \Rc_{\gamma}(s_0a_2(s_2a_1)^{t}s_2) + \varepsilon\\
&\geq \Rc_{\gamma}(s_0a_2(s_2a_1)^{t}s_2) + \frac{\gamma^t}{1-\gamma} + \frac{\varepsilon}{2}.
\end{align*}
Since $\lim_{p_3\to 1}p_3^{t} = 1$, there exists $p_3 < 1$ such that $1 - p_3^{t} \leq \frac{\varepsilon}{8}(1-\gamma)$. Let $p_1 < 1$ be such that $p_1\cdot\Rc_{\gamma}(s_0a_1(s_1a_1)^{\omega}) \geq \Rc_{\gamma}(s_0a_2(s_2a_1)^{t}s_2) + \frac{\gamma^t}{1-\gamma} + \frac{\varepsilon}{4}$ and let $p_2 = 1$. Then, we have
\begin{align*}
J_{\p'}^{\M}(\pi_1) &\geq p_1\cdot\Rc_{\gamma}(s_0a_1(s_1a_1)^{\omega})\\
&\geq \Rc_{\gamma}(s_0a_2(s_2a_1)^{t}s_2) + \frac{\gamma^t}{1-\gamma} +  \frac{\varepsilon}{4}\\
&\geq p_3^{t}\cdot\Big(\Rc_{\gamma}(s_0a_2(s_2a_1)^{t}s_2)+ \frac{\gamma^t}{1-\gamma}\Big) + \frac{\varepsilon}{4}\\
&\geq p_3^{t}\cdot\Big(\Rc_{\gamma}(s_0a_2(s_2a_1)^{t}s_2)+ \frac{\gamma^t}{1-\gamma}\Big) + (1-p_3^{t})\cdot\Big(\frac{1}{1-\gamma}\Big) +  \frac{\varepsilon}{8}\\
&> J^{\M}_{\p'}(\pi_2),
\end{align*}
where the last inequality followed from the fact that when using $\pi_2$, the system stays in state $s_2$ for at least $t$ steps with probability $p_3^t$, and the reward of such trajectories is bounded from above by $\Rc_{\gamma}(s_0a_2(s_2a_1)^{t}s_2)+ \frac{\gamma^t}{1-\gamma}$, along with the fact that the reward of all other trajectories is bounded by $\frac{1}{1-\gamma}$. This leads to a contradiction since $\pi_1$ maximizes $J_{\p'}^{\M}$ but $J_{\p}^{\M}(\pi_1) = p_1 < 1 = J_{\p}^{\M}(\pi_2)$.\qed
\end{proof}

We do not use the fact that the reward machine is finite state in our proof; therefore, the above result applies to general non-Markovian reward functions of the form $R:\traj_f(S,A)\to[0,1]$ with $\gamma$-discounted reward defined by $R_{\gamma}(\zeta) = \sum_{i=0}^{\infty}\gamma^iR(\zeta_{0:i})$. The proof can be easily modified to show the result for safety specifications as well.

\begin{figure}
\centering
\begin{center}
\begin{tikzpicture}[scale=0.13]
\tikzstyle{every node}+=[inner sep=0pt]
\draw [black] (29.2,-27) circle (3);
\draw (29.2,-27) node {$u_0$};
\draw [black] (52.9,-27) circle (3);
\draw (52.9,-27) node {$u_1$};
\draw [black] (23.1,-27) -- (26.2,-27);
\fill [black] (26.2,-27) -- (25.4,-26.5) -- (25.4,-27.5);
\draw [black] (32.2,-27) -- (49.9,-27);
\fill [black] (49.9,-27) -- (49.1,-26.5) -- (49.1,-27.5);
\draw (41.05,-27.5) node [below] {$\code{any}(X)\mbox{ }/\mbox{ }1$};
\draw [black] (27.877,-24.32) arc (234:-54:2.25);
\draw (29.2,-19.75) node [above] {$\lnot \code{any}(X)\mbox{ }/\mbox{ }0$};
\fill [black] (30.52,-24.32) -- (31.4,-23.97) -- (30.59,-23.38);
\draw [black] (51.577,-24.32) arc (234:-54:2.25);
\draw (52.9,-19.75) node [above] {$\top\mbox{ }/\mbox{ }1$};
\fill [black] (54.22,-24.32) -- (55.1,-23.97) -- (54.29,-23.38);
\end{tikzpicture}
\end{center}
\caption{ARM for $\p= \L_{\reach}(X)$.}
\label{fig:limavgrm}
\end{figure}

The main challenge in translating to discounted-sum rewards is the fact that the rewards vanish over time and the overall reward depends primarily on the first few steps. This issue can be partly overcome by using limit-average rewards. In fact, we have the following theorem.

{\begin{theorem}
There exists an optimality preserving specification translation from reachability and safety specifications to abstract reward machines (with limit-average aggregation).
\end{theorem}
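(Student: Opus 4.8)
The plan is to exhibit an explicit translation $\F$ and verify optimality preservation directly; I treat reachability and safety in parallel, as the constructions are mirror images. Fix $\prop$ and suppose $\p = \L_{\reach}(X)$. Given $(S,A,s_0,L)$, the translation first inspects $L(s_0)$. If $L(s_0)\cap X\neq\emptyset$ (for safety: if $L(s_0)\not\subseteq X$), then under $\p$ every run already satisfies (resp. violates) the specification from the start, so $J^{\M}_{\p}\equiv 1$ (resp. $\equiv 0$) for every $P$ and every policy is optimal; $\F$ then returns a one-state ARM with constant reward $0$, for which again every policy is optimal, so $\Pi_{\opt}(\M,\p')=\Pi(S,A)=\Pi_{\opt}(\M,\p)$. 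Otherwise $\F$ returns the two-state ARM $\Rc$ of Figure~\ref{fig:limavgrm}: from $u_0$, a label meeting $X$ yields reward $1$ and moves to the absorbing state $u_1$, any other label yields reward $0$ and stays in $u_0$, and from $u_1$ every label yields reward $1$. For safety one uses the mirror machine: $u_0$ is ``still safe'', a label $\subseteq X$ yields reward $1$ and keeps the machine in $u_0$, a label $\not\subseteq X$ yields reward $0$ and moves to an absorbing $u_1$, where every label yields reward $0$. In every case the output is an abstract reward machine (independent of $S$ and $A$), $\F$ is computable, and it depends only on $(S,A,s_0,L,\p)$ and not on $P$.

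The heart of the argument is to show, in the non-trivial case, that $J^{\M}_{\p'}(\pi)=J^{\M}_{\p}(\pi)$ for every policy $\pi$ and every $P$. Fix a run $\zeta=s_0a_0s_1a_1\cdots$. For reachability, if $L(\zeta)\in\L_{\reach}(X)$, let $k\geq 1$ be the first index with $L(s_k)\cap X\neq\emptyset$ (note $k\neq 0$, since we are in the case $L(s_0)\cap X=\emptyset$); tracing $\Rc$ along $\zeta$, the machine sits in $u_0$ at times $0,\dots,k-1$ and in $u_1$ afterwards, so the per-step reward is $0$ at step $i$ for $i\leq k-2$ and $1$ for $i\geq k-1$, giving $\Rc_{\avg}^t(\zeta)=\frac{t-k+1}{t}$ for $t\geq k$, which tends to $1$. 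If $L(\zeta)\notin\L_{\reach}(X)$, the machine stays in $u_0$ forever and every reward is $0$, so $\Rc_{\avg}^t(\zeta)=0$ for all $t$. Hence $\Rc_{\avg}^t(\zeta)\to\mathbbm{1}[L(\zeta)\in\L_{\reach}(X)]$ pointwise, with $0\leq\Rc_{\avg}^t(\zeta)\leq 1$. Since $\{\zeta\mid L(\zeta)\in\L_{\reach}(X)\}$ is measurable (a countable union of finite unions of cylinders, using finiteness of $S$ and $A$), the bounded convergence theorem yields $\lim_{t\to\infty}\E_{\zeta\sim\D^{\M}_{\pi}}[\Rc_{\avg}^t(\zeta)]=\D^{\M}_{\pi}(\{\zeta\mid L(\zeta)\in\L_{\reach}(X)\})=J^{\M}_{\p}(\pi)$; in particular the $\liminf$ defining $J^{\M}_{\p'}(\pi)$ equals this limit. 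The safety case is identical with the reward values $1$ and $0$ exchanged on safe and violating prefixes, giving $\Rc_{\avg}^t(\zeta)\to\mathbbm{1}[L(\zeta)\in\L_{\safe}(X)]$ and hence $J^{\M}_{\p'}(\pi)=J^{\M}_{\p}(\pi)$.

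Since $J^{\M}_{\p'}$ and $J^{\M}_{\p}$ agree as functions on $\Pi(S,A)$ in the non-trivial case, they have the same supremum and the same optimal set; together with the trivial case handled above, this gives $\Pi_{\opt}(\M,\p')\subseteq\Pi_{\opt}(\M,\p)$ for all $P$, so $\F$ is an optimality preserving specification translation.

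The one genuine subtlety — the step I would be careful to spell out — is that the ARM semantics of Section~\ref{sec:spec_sec} inspect $L(s_1),L(s_2),\dots$ but never $L(s_0)$, which is exactly why the translation must branch on $L(s_0)$ before committing to a machine; everything else is the routine bounded-convergence interchange of limit and expectation. (One could sidestep the branch by adopting the standard convention that $s_0$'s label is fixed — e.g.\ via the fresh-initial-state encoding noted in Section~\ref{Sec:Prelims} — but the construction above needs no such assumption.)
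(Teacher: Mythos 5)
Your construction is exactly the paper's: the two-state ARM of Figure~\ref{fig:limavgrm} for reachability and its reward-flipped mirror for safety, with the identity $J^{\M}_{\p'}(\pi)=J^{\M}_{\p}(\pi)$ established by pointwise convergence of the $t$-step averages to the indicator of satisfaction plus bounded convergence --- the paper asserts this as ``easy to see'' and you have simply filled in the details. Your only addition is the branch on $L(s_0)$, which correctly patches the fact that the ARM semantics as defined never read the initial label; the paper glosses over this edge case, so your version is, if anything, slightly more careful.
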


\begin{proof}
An abstract reward machine for the specification $\p = \L_{\reach}(X)$ is shown in Figure~\ref{fig:limavgrm}. Each transition is labeled by a Boolean formula over $\prop$ followed by the reward. We use $\code{any}(X)$ to denote $\bigvee_{b\in X}b$. It is easy to see that for any MDP $\M$ and any policy $\pi$ of $\M$, we have $J_{\Rc}^{\M}(\pi) = J^{\M}_{\p}(\pi)$. An ARM for $\p=\L_{\safe}(\prop\setminus X)$ is obtained by replacing the reward value $r$ by $1-r$ on all transitions.\qed
\end{proof}}
However, we can show that there does not exist an ARM for the specification $\p=\L_{\ltl}(\Box\Diamond b)$, which requires the proposition $b$ to be true infinitely often. Intuitively, the result follows from the fact that, given any ARM, we can construct an infinite word $w\in(2^{\prop})^{\omega}$ in which $b$ holds true rarely but infinitely often such that $w$ achieves a lower limit-average reward than another word $w'$ in which $b$ holds true more frequently.

\begin{theorem}\label{thm:no_arm}
Let $\prop = \{b\}$ and $\p = \L_{\ltl}(\Box\Diamond b)$. For any ARM specification $\p' = \Rc$ with limit-average rewards, there exists an MDP $\M = (S, A, s_0, P, L)$ such that $\Pi_{\opt}(\M,\p')\not\subseteq\Pi_{\opt}(\M,\p)$.
\end{theorem}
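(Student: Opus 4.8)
## Proof Proposal for Theorem~\ref{thm:no_arm}

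The plan is to argue by contradiction: assume some ARM $\Rc = (U, u_0, \delta_u, \delta_r)$ with limit-average aggregation preserves optimality for $\p = \L_{\ltl}(\Box\Diamond b)$ over every MDP. The key structural fact to exploit is that an ARM only reads the label sequence $L(\zeta) = L(s_0)L(s_1)\cdots \in (2^{\prop})^{\omega}$, so its behavior is entirely determined by a finite-state machine running on words over $2^{\{b\}} = \{\emptyset, \{b\}\}$. First I would note that since $\Rc$ is finite-state and reads a two-letter alphabet, there is an integer $K \leq |U|$ such that from any ARM state, reading $K$ consecutive $\emptyset$-letters lands in some state on a cycle of the ``all-$\emptyset$'' transition graph; on such a cycle all rewards are some fixed set of values, and the limit-average reward of the word $\emptyset^\omega$ (from $u_0$) is some constant $r_\emptyset$. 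The crucial observation is that by inserting $b$'s only very sparsely — say the $n$-th occurrence of $b$ separated from the previous by a block of $\emptyset$'s whose length grows without bound — the limit-average reward of the resulting word $w$ (which \emph{does} satisfy $\Box\Diamond b$) can be driven arbitrarily close to $r_\emptyset$. Meanwhile there is some ``good'' periodic word $w'$ satisfying $\Box\Diamond b$ (e.g.\ $(\{b\}\emptyset^m)^\omega$ for a suitable fixed $m$) whose limit-average reward is a fixed value $r'$; if $r' > r_\emptyset$ we get a separation between two words both satisfying the specification, and if $r' \le r_\emptyset$ for all choices then in particular $\emptyset^\omega$ achieves reward $\ge$ that of every $\Box\Diamond b$-word, giving a separation in the other direction.

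Concretely, I would construct the MDP $\M$ as follows. Take states whose labels realize whichever separating pair of words the above analysis selects; the cleanest version is a two-branch gadget from $s_0$: one action leads deterministically down a branch whose infinite label sequence is the ``bad-but-satisfying'' word $w$ (a lasso-shaped path is not enough since $w$ is non-ultimately-periodic, so this branch needs to be an infinite chain — instead I will make $w$ itself ultimately periodic but with a long $\emptyset$-period, chosen long enough that its limit-average reward is within $\delta$ of $r_\emptyset$), and the other action leads to a branch realizing $w'$. Both branches satisfy $\Box\Diamond b$, so every policy is optimal for $\p$ (both deterministic policies have $J^\M_\p = 1$). If the ARM assigns strictly larger limit-average reward to one branch than the other, then the policy choosing the smaller-reward branch — while optimal for $\p$ — is not optimal for $\p' = \Rc$, contradicting $\Pi_{\opt}(\M,\p') \subseteq \Pi_{\opt}(\M,\p)$; wait, that inclusion goes the wrong way, so instead: the policy choosing the \emph{larger}-reward branch is in $\Pi_{\opt}(\M,\p')$, and it is also in $\Pi_{\opt}(\M,\p)$, so no contradiction yet. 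The real contradiction must come from making the ARM-optimal policy \emph{violate} $\p$. So I would instead add a third branch realizing $\emptyset^\omega$ (which does \emph{not} satisfy $\Box\Diamond b$): if $r_\emptyset \ge$ the limit-average reward of every $\Box\Diamond b$-word realizable as the other branches, the policy taking the $\emptyset^\omega$ branch is ARM-optimal but not $\p$-optimal. Otherwise some $\Box\Diamond b$-word $w^\star$ beats $r_\emptyset$; then I pick the "bad" word $w$ to have limit-average reward $< r_\emptyset$ (achievable since sparse $b$'s push the average toward $r_\emptyset$ from below if a single $b$-letter's reward contribution is transient, or I adjust the gadget), put $w$ and $\emptyset^\omega$ as two branches — both...

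Let me restate the clean version: I will show that for any ARM there exist two label-sequences $w_1, w_2$ with $w_1 \models \Box\Diamond b$, $w_2 \not\models \Box\Diamond b$, yet the ARM's limit-average reward on $w_2$ is $\ge$ that on $w_1$. Then the MDP with two actions at $s_0$ realizing these two sequences has $J^\M_\p(\pi_1) = 1 > 0 = J^\M_\p(\pi_2)$ but $J^\M_{\p'}(\pi_2) \ge J^\M_{\p'}(\pi_1)$, so $\pi_2 \in \Pi_{\opt}(\M,\p')$ while $\pi_2 \notin \Pi_{\opt}(\M,\p)$. To produce such $w_1, w_2$: let $r_{\max}^{\emptyset}$ be the maximum limit-average reward achievable by any $\emptyset$-only infinite suffix behavior of the ARM reachable from $u_0$ by a finite $\emptyset$-prefix (a max over finitely many cycles, hence attained, say by $w_2 = \emptyset^\omega$ possibly after a finite prefix — but from $u_0$ the only $\emptyset$-reachable suffix is determined, so $r_2 := $ limit-average of $\emptyset^\omega$ from $u_0$ is just this one value). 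Now take $w_1 = \emptyset^{n_1} \{b\} \emptyset^{n_2} \{b\} \emptyset^{n_3} \cdots$ with $n_k \to \infty$ fast; the $b$-letters occur with density $0$, and a standard Cesàro-average argument shows $\limsup$ and $\liminf$ of the running average of $w_1$'s reward stream both equal the corresponding $\emptyset$-cycle average, i.e.\ equal $r_2$. Hence the limit-average reward of $w_1$ equals $r_2$, so $r_2 \ge r_1$ (in fact $=$), and $w_1 \models \Box\Diamond b$ while $\emptyset^\omega \not\models \Box\Diamond b$: done.

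The main obstacle I anticipate is the density-zero averaging argument: one must check carefully that inserting infinitely many $b$'s, each of which may kick the ARM onto a transient path before it returns to an $\emptyset$-cycle, does not perturb the $\liminf$ of the running average away from the pure-$\emptyset$-cycle value. This is where the growth rate $n_k \to \infty$ is used — the perturbation caused by the $k$-th $b$ is absorbed over a window of length $\Theta(n_{k+1})$, which dominates the bounded transient length $|U|$. A secondary subtlety is handling the case where the pure-$\emptyset$ run from $u_0$ enters its cycle only after a finite transient and the cycle's average might be lower than some other cycle's; but since we only need $w_2 = \emptyset^\omega$ (whose fate from $u_0$ is fixed) compared against $w_1$ (whose average we showed equals exactly $w_2$'s average by the same argument), this does not arise — the comparison is an equality, and any equality suffices to put the bad policy $\pi_2$ in the optimal set. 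Finally I would remark that since the construction never used finiteness of $\Rc$'s memory beyond extracting a recurrent cycle, and since the word $w_1$ is ultimately periodic with an arbitrarily long period, one can even make $\M$ finite-state with only a long lasso on one branch, matching the finite-MDP requirement in the theorem statement.
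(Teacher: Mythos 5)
There is a genuine gap, and it shows up in two places. First, your central averaging claim --- that the sparse word $w_1 = \emptyset^{n_1}\{b\}\emptyset^{n_2}\{b\}\cdots$ has the same limit-average reward as $w_2=\emptyset^{\omega}$ --- is false in general. Reading a single $\{b\}$ can permanently relocate the ARM to a different region of its state space whose $\emptyset$-cycles carry a different (in particular, higher) reward. The reachability ARM of Figure~\ref{fig:limavgrm} is an explicit counterexample: on $\emptyset^{\omega}$ it earns limit-average $0$, but on your $w_1$ it moves to $u_1$ after the first $b$ and earns $1$ forever, so the limit-average of $w_1$ is $1$, not $r_2=0$. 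The density-zero argument controls the \emph{fraction of time} spent on $b$-transitions, but not the identity of the $\emptyset$-cycle occupied between them. Second, even setting that aside, the theorem requires a finite MDP, so $w_1$ must be made ultimately periodic, say $(\emptyset^{n}\{b\})^{\omega}$; its limit-average then sits within $O(1/n)$ of some $\emptyset$-cycle's average but can be \emph{strictly above} it (the one $\{b\}$-transition per period contributes $\Theta(1/n)$). An adversarial ARM can therefore always give every finitely-realizable word satisfying $\Box\Diamond b$ a strictly larger reward than every word violating it, and your deterministic two-branch MDP then yields no contradiction: the ARM-optimal policy picks the satisfying branch, which is also $\p$-optimal.

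The missing idea is the one you briefly reached for and then dropped: the contradiction must make the ARM-optimal policy \emph{violate} $\p$ with positive probability, and a purely deterministic comparison cannot force this. The paper's proof first establishes a \emph{uniform} gap $\varepsilon>0$ between the average reward of every all-$\{b\}$ cycle and every all-$\emptyset$ cycle (each such pair is separated by a small deterministic MDP, and there are finitely many cycles), then builds the sparse-but-satisfying word as a genuine cycle $C_m = C_-^{m}C$ inside a bottom SCC (so it is finitely realizable and its average is provably at most $\Rc_{\avg}(C_-)+\varepsilon/2$), and finally attaches a \emph{probabilistic} gadget: the high-reward branch reaches the positive cycle only with probability $p<1$ and otherwise falls into a dead state. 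The gap $\varepsilon$ is what absorbs the factor $p$, making the branch with satisfaction probability $p<1$ strictly ARM-preferred over the branch with satisfaction probability $1$. Your construction has no slack to absorb either the $O(1/n)$ advantage of the satisfying branch or a factor $p<1$, which is why it cannot be completed as written.
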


\begin{proof}
For the sake of contradiction, let $\p' = \Rc = (U, u_0, \delta_u, \delta_r)$ be an ARM that preserves optimality w.r.t $\p$ for all MDPs. The extended state transition function $\delta_u: U\times(2^{\prop})^*\to U$ is defined naturally. WLOG, we assume that all states in $\Rc$ are reachable from the initial state and that the rewards are normalized\footnote{The rewards assigned by an ARM have to be independent of $S$.}---i.e., $\delta_r:U\to [2^{\prop} \to [0,1]]$.

A \emph{cycle} in $\Rc$ is a sequence $C = u_1\ell_1u_2\ell_2\ldots \ell_{k}u_{k+1}$ where $u_i\in U$, $\ell_i\in2^{\prop}$, $u_{i+1} = \delta_u(u_i,\ell_i)$ for all $i$, $u_{k+1} = u_1$, and the states $u_1,\ldots,u_{k}$ are distinct. A cycle is \emph{negative} if $\ell_{i} = \emptyset$ for all $i$, and \emph{positive} if $\ell_{i} = \{b\}$ for all $i$. The average reward of a cycle $C$ is given by $\Rc_{\avg}(C) = \frac{1}{k}\sum_{i=1}^{k}\delta_r(u_i)(\ell_i)$. For any cycle $C = u_1\ell_1\ldots\ell_{k}u_{k+1}$ we can construct a deterministic MDP $\M_{C}$ with a single action that first generates a sequence of labels $\sigma$ such that $\delta_u(u_0, \sigma) = u_1$, and then repeatedly generates the sequence of labels $\ell_1\ldots\ell_{k}$. The limit average reward of the only policy $\pi$ in $\M_C$ is $J_{\Rc}^{\M_C}(\pi) = \Rc_{\avg}(C)$ since the cycle $C$ repeats indefinitely. 

Now, given any positive cycle $C_+$ and any negative cycle $C_{-}$, we claim that $\Rc_{\avg}(C_{+}) > \Rc_{\avg}(C_{-})$. To show this, consider an MDP $\M$ with two actions $a_1$ and $a_2$ such that taking action $a_1$ in the initial state $s_0$ leads to $\M_{C_{+}}$, and taking action $a_2$ in $s_0$ leads to $\M_{C_{-}}$. The policy $\pi_1$ that takes action $a_1$ in $s_0$ achieves a satisfaction probability of $J_{\p}^{\M}(\pi_1) = 1$, whereas the policy $\pi_2$ taking action $a_2$ in $s_0$ achieves $J_{\p}^{\M}(\pi_2) = 0$. Since $J_{\Rc}^{\M}(\pi_1) = \Rc_{\avg}(C_{+})$ and $J_{\Rc}^{\M}(\pi_2) = \Rc_{\avg}(C_{-})$, we must have that $\Rc_{\avg}(C_{+}) > \Rc_{\avg}(C_{-})$ to preserve optimality w.r.t. $\p$. Since there are only finitely many cycles in $\Rc$, there exists an $\varepsilon > 0$ such that for any positive cycle $C_{+}$ and any negative cycle $C_{-}$ we have $\Rc_{\avg}(C_{+}) \geq \Rc_{\avg}(C_{-}) + \varepsilon$.

{Consider a bottom strongly connected component (SCC) of the graph of $\Rc$. We show that this component contains a negative cycle $C_{-} = u_1\ell_1\ldots\ell_{k}u_{k+1}$ along with a second cycle $C = u_1'\ell_1'\ldots\ell_{k'}'u_{k'+1}'$ such that $u_1' = u_1$ and $\ell_1' = \{b\}$. To construct $C_-$, from any state in the bottom SCC of $\Rc$, we can follow edges labeled $\ell=\emptyset$ until we repeat a state, and let this state be $u_1$. Then, to construct $C$,  from $u_1'=u_1$, we can follow the edge labeled $\ell_1'=\{b\}$ to reach $u_2'=\delta(u_1',\ell_1')$; since we are in the bottom SCC, there exists a path from $u_2'$ back to $u_1'$.}
Now, consider a sequence of the form $C_{m} = C_{-}^mC$, where $m\in\N$. We have 
\begin{align*}
\Rc_{\avg}(C_m) &= \frac{mk\Rc_{\avg}(C_{-}) + k'\Rc_{\avg}(C)}{mk+k'} \\
&\leq \frac{mk}{mk + k'}\Rc_{\avg}(C_{-}) + \frac{k'}{mk + k'}\\
&\leq \Rc_{\avg}(C_{-}) + \frac{k'}{mk + k'}.
\end{align*}
Let $m$ be such that $\frac{k'}{mk + k'} \leq \frac{\varepsilon}{2}$ and $C_{+}$ be any positive cycle. Then, we have $\Rc_{\avg}(C_{+}) \geq \Rc_{\avg}(C_m) + \frac{\varepsilon}{2}$; therefore, there exists $p < 1$ such that $p\cdot\Rc_{\avg}(C_{+}) \geq \Rc_{\avg}(C_m) + \frac{\varepsilon}{4}$. Now, we can construct an MDP $\M$ in which (i) taking action $a_1$ in initial state $s_0$ leads to $\M_{C_{+}}$ with probability $p$, and to a dead state (where $b$ does not hold) with probability $1-p$, and (ii) taking action $a_2$ in initial state $s_0$ leads to a deterministic single-action component that forces $\Rc$ to reach $u_1$ {(recall that WLOG, all states in $\Rc$ are assumed to be reachable from $u_0$)}, and then generates the sequence of labels in $C_m$ indefinitely.
Let $\pi_1$ and $\pi_2$ be policies that select $a_1$ and $a_2$ in $s_0$, respectively. Then, we have
\begin{align*}
    J^{\M}_{\Rc}(\pi_1) \geq p\cdot\Rc_{\avg}(C_{+})
    \geq \Rc_{\avg}(C_m) + \frac{\varepsilon}{4}
    = J^{\M}_{\Rc}(\pi_2) + \frac{\varepsilon}{4}.
\end{align*}
However $J^{\M}_{\p}(\pi_1) = p < 1 = J^{\M}_{\p}(\pi_2)$, which is a contradiction.\qed
\end{proof}

Note that the above theorem claims only the non-existence of \emph{abstract} reward machines for the LTL specification $\Box\Diamond b$, whereas Theorem~\ref{thm:discount} holds for arbitrary reward machines and history dependent reward functions. We do not rule out the possibility of a specification translation that constructs different ARMs (with limit-average rewards) for the same LTL objective depending on $S$, $A$, $s_0$ and $L$. This leads to the following natural question.

\begin{open}
Does there exist an optimality preserving specification translation from LTL specifications to reward machines with limit-average rewards?
\end{open}

\subsection{Sampling-based Reduction}

The previous section suggests that keeping the MDP $\M$ fixed might be insufficient for reducing LTL specifications to reward-based ones. In this section, we formalize the notion of a \emph{sampling-based reduction} where we are allowed to modify the MDP $\M$ in a way that makes it possible to simulate the modified MDP $\bar{\M}$ using a simulator for $\M$ without the knowledge of the transition probabilities of $\M$.

Given an RL task $(\M, \p)$ we want to construct another RL task $(\bar{\M},\p')$ and a function $f$ that maps policies in $\bar{\M}$ to policies in $\M$ such that for any policy $\bar{\pi}\in\Pi_{\opt}(\bar{\M},\p')$, we have $f(\bar{\pi}) \in \Pi_{\opt}(\M,\p)$. Since it should be possible to simulate $\bar{\M}$ without the knowledge of the transition probability function $P$ of $\M$, we impose several constraints on $\bar{\M}$.

Let $\M = (S,A,s_0, P, L)$ and $\bar{\M} = (\bar{S}, \bar{A}, \bar{s}_0, \bar{P}, \bar{L})$. First, it must be the case that $\bar{S}$, $\bar{A}$, $\bar{s}_0$, $\bar{L}$ and $f$ are independent of $P$. Second, since the simulator of $\bar{\M}$ uses the simulator of $\M$ we can assume that at any time, the state of the simulator of $\bar{\M}$ includes the state of the simulator of $\M$. Formally, there is a map $\beta:\bar{S}\to S$ such that for any $\bar{s}$, $\beta(\bar{s})$ is the state of $\M$ stored in $\bar{s}$. Since it is only possible to simulate $\M$ starting from $s_0$ we must have $\beta(\bar{s}_0) = s_0$. Next, when taking a step in $\bar{\M}$, a step in $\M$ may or may not occur, but the probability that a transition is sampled from $\M$ should be independent of $P$. Given these desired properties, we are ready to define a step-wise sampling-based reduction.

\begin{definition}\label{def:sim}
A step-wise sampling-based reduction is a computable function $\F$ that maps the tuple $(S, A, s_0, L, \p)$ to a tuple $(\bar{S}, \bar{A}, \bar{s}_0, \bar{L}, f, \beta, \alpha, q_1, q_2, \p')$ where $f: \Pi(\bar{S}, \bar{A})\to\Pi(S,A)$, $\beta:\bar{S}\to S$, $\alpha:\bar{S}\times \bar{A}\to \D(A)$, $q_1: \bar{S}\times \bar{A}\times \bar{S}\to[0,1]$, $q_2: \bar{S}\times \bar{A}\times A\times \bar{S}\to[0,1]$ and $\p'$ is a specification such that
\begin{itemize}
    \item $\beta(\bar{s}_0) = s_0$,
    \item $q_1(\bar{s}, \bar{a}, \bar{s}') = 0$ if $\beta(\bar{s})\neq\beta(\bar{s}')$ and,
    \item for any $\bar{s}\in\bar{S}$, $\bar{a}\in\bar{A}$, $a\in A$, and $s'\in S$ we have \begin{equation}\label{eq:norm}\sum_{\bar{s}'\in\beta^{-1}(s')}q_2(\bar{s},\bar{a}, a, \bar{s}') = {1-\sum_{\bar{s}'\in\bar{S}}q_1(\bar{s},\bar{a},\bar{s}')}.
    \end{equation}
\end{itemize}
For any transition probability function $P:S\times A\times S\to[0,1]$, the new transition probability function $\bar{P}:\bar{S}\times \bar{A}\times \bar{S}\to [0,1]$ is defined by
    \begin{equation}\label{eq:newprob}
        \bar{P}(\bar{s},\bar{a},\bar{s}') = q_1(\bar{s},\bar{a},\bar{s}') + \E_{a\sim \alpha(\bar{s}, \bar{a})}[q_2(\bar{s},\bar{a},a,\bar{s}')P(\beta(\bar{s}),a,\beta(\bar{s}'))].
    \end{equation}
\end{definition}

In Equation~\ref{eq:newprob}, $q_1(\bar{s}, \bar{a}, \bar{s}')$ denotes the probability with which $\bar{\M}$ steps to $\bar{s}'$ without sampling a transition from $\M$. In the event that a step in $\M$ does occur, $\alpha(\bar{s}, \bar{a})(a)$ gives the probability of the action $a$ taken in $\M$ and $q_2(\bar{s}, \bar{a}, a, \bar{s}')$ is the (unnormalized) probability with which $\bar{\M}$ transitions to $\bar{s}'$ given that action $a$ in $\M$ caused a transition to $\beta(\bar{s}')$. It is easy to see that, for any $P$, $\bar{P}$ defined in Equation~\ref{eq:newprob} is a valid transition probability function. 

\begin{lemma}\label{lem:validprob}
Given a step-wise sampling-based reduction $\F$, for any MDP $\M = (S,A,s_0,P,L)$ and specification $\p$, the function $\bar{P}$ defined by $\F$ is a valid transition probability function.
\end{lemma}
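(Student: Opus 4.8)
The plan is to verify directly that $\bar{P}(\bar{s},\bar{a},\cdot)$ defined by Equation~\ref{eq:newprob} is nonnegative and sums to $1$ over $\bar{s}'\in\bar{S}$. Nonnegativity is immediate: $q_1\geq 0$ by assumption, and the expectation term is an average (over $a\sim\alpha(\bar{s},\bar{a})$) of the product $q_2(\bar{s},\bar{a},a,\bar{s}')P(\beta(\bar{s}),a,\beta(\bar{s}'))$, which is a product of nonnegative quantities since $q_2\geq 0$ and $P\geq 0$. Hence $\bar{P}(\bar{s},\bar{a},\bar{s}')\in[0,\infty)$ for every $\bar{s}'$. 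So the only real content is the normalization $\sum_{\bar{s}'\in\bar{S}}\bar{P}(\bar{s},\bar{a},\bar{s}')=1$.

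First I would fix $\bar{s}\in\bar{S}$ and $\bar{a}\in\bar{A}$ and split the sum over $\bar{s}'$ using Equation~\ref{eq:newprob}:
\begin{align*}
\sum_{\bar{s}'\in\bar{S}}\bar{P}(\bar{s},\bar{a},\bar{s}')
&= \sum_{\bar{s}'\in\bar{S}}q_1(\bar{s},\bar{a},\bar{s}')
 + \sum_{\bar{s}'\in\bar{S}}\E_{a\sim\alpha(\bar{s},\bar{a})}\big[q_2(\bar{s},\bar{a},a,\bar{s}')\,P(\beta(\bar{s}),a,\beta(\bar{s}'))\big].
\end{align*}
Writing $Q_1 := \sum_{\bar{s}'\in\bar{S}}q_1(\bar{s},\bar{a},\bar{s}')$ for the first term, I would interchange the finite sum over $\bar{s}'$ with the expectation over $a$ (a finite sum in disguise, so no convergence issue) to rewrite the second term as $\E_{a\sim\alpha(\bar{s},\bar{a})}\big[\sum_{\bar{s}'\in\bar{S}}q_2(\bar{s},\bar{a},a,\bar{s}')\,P(\beta(\bar{s}),a,\beta(\bar{s}'))\big]$. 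The key step is then to partition $\bar{S}$ according to the value of $\beta$: $\bar{S}=\bigsqcup_{s'\in S}\beta^{-1}(s')$. On the block $\beta^{-1}(s')$ the factor $P(\beta(\bar{s}),a,\beta(\bar{s}'))$ equals the constant $P(\beta(\bar{s}),a,s')$, so for each fixed $a$,
\begin{align*}
\sum_{\bar{s}'\in\bar{S}}q_2(\bar{s},\bar{a},a,\bar{s}')\,P(\beta(\bar{s}),a,\beta(\bar{s}'))
&= \sum_{s'\in S}P(\beta(\bar{s}),a,s')\sum_{\bar{s}'\in\beta^{-1}(s')}q_2(\bar{s},\bar{a},a,\bar{s}')\\
&= \sum_{s'\in S}P(\beta(\bar{s}),a,s')\,(1-Q_1) = (1-Q_1)\sum_{s'\in S}P(\beta(\bar{s}),a,s') = 1-Q_1,
\end{align*}
where the second equality uses the normalization constraint \eqref{eq:norm} and the last uses $\sum_{s'\in S}P(\beta(\bar{s}),a,s')=1$. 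Since this holds for every $a$ in the support of $\alpha(\bar{s},\bar{a})$, the expectation is also $1-Q_1$, and altogether $\sum_{\bar{s}'\in\bar{S}}\bar{P}(\bar{s},\bar{a},\bar{s}') = Q_1 + (1-Q_1) = 1$.

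I do not anticipate a genuine obstacle here — the lemma is essentially a bookkeeping check that the definition is well-posed. The one point that needs a small amount of care is the use of constraint \eqref{eq:norm}: as literally stated, the right-hand side of \eqref{eq:norm} does not depend on $a$ or $s'$, so the inner sum $\sum_{\bar{s}'\in\beta^{-1}(s')}q_2(\bar{s},\bar{a},a,\bar{s}')$ is the same constant $1-Q_1$ for every choice of $a\in A$ and $s'\in S$; I would make this explicit rather than rely on the reader to unpack the quantifiers. I would also note in passing that nonnegativity of $\bar{P}$ requires $Q_1\le 1$, which is itself a consequence of \eqref{eq:norm} (the left-hand side is a sum of nonnegative terms and hence $\geq 0$), so the constraints in Definition~\ref{def:sim} are exactly what is needed. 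This completes the proof.
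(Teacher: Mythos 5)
Your proof is correct and follows essentially the same route as the paper's: split off $\sum_{\bar{s}'}q_1$, exchange the finite sum with the expectation over $a$, partition $\bar{S}$ by the fibers of $\beta$, and apply Equation~\eqref{eq:norm} together with $\sum_{s'}P(\beta(\bar{s}),a,s')=1$. The extra observation that \eqref{eq:norm} also forces $\sum_{\bar{s}'}q_1(\bar{s},\bar{a},\bar{s}')\le 1$ (needed for a sensible interpretation of $q_1$) is a nice touch the paper leaves implicit.
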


\begin{proof}
It is easy to see that $\bar{P}(\bar{s},\bar{a},\bar{s}')\geq 0$ for all $\bar{s}, \bar{s}'\in\bar{S}$ and $\bar{a}\in\bar{A}$. Now for any $\bar{s}\in\bar{S}$ and $\bar{a}\in\bar{A}$, letting $\sum_{\bar{s}'}q_1(\bar{s},\bar{a},\bar{s}') = p(\bar{s},\bar{a})$, we have
\begin{align*}
    \sum_{\bar{s}'\in\bar{S}}P(\bar{s},\bar{a}, \bar{s}') &= p(\bar{s},\bar{a}) + \sum_{\bar{s}'\in\bar{S}}\E_{a\sim \alpha(\bar{s}, \bar{a})}[q_2(\bar{s},\bar{a},a,\bar{s}')P(\beta(\bar{s}),a,\beta(\bar{s}'))]\\
    &= p(\bar{s},\bar{a}) + \E_{a\sim \alpha(\bar{s}, \bar{a})}\Big[\sum_{\bar{s}'\in\bar{S}}q_2(\bar{s},\bar{a},a,\bar{s}')P(\beta(\bar{s}),a,\beta(\bar{s}'))\Big]\\
    &= p(\bar{s},\bar{a}) + \E_{a\sim \alpha(\bar{s}, \bar{a})}\Big[\sum_{s'\in S}\sum_{\bar{s}'\in\beta^{-1}(s')}q_2(\bar{s},\bar{a},a,\bar{s}')P(\beta(\bar{s}),a,\beta(\bar{s}'))\Big]\\
    &= p(\bar{s},\bar{a}) + \E_{a\sim \alpha(\bar{s}, \bar{a})}\Big[\sum_{s'\in S}P(\beta(\bar{s}),a,s')\sum_{\bar{s}'\in\beta^{-1}(s')}q_2(\bar{s},\bar{a},a,\bar{s}'))\Big]\\
    &= p(\bar{s},\bar{a}) + \E_{a\sim \alpha(\bar{s}, \bar{a})}\Big[\sum_{s'\in S}P(\beta(\bar{s}),a,s')(1-p(\bar{s},\bar{a}))\Big]\\
    &= 1,
\end{align*}
where the penultimate step followed from Equation~\ref{eq:norm}.\qed
\end{proof}

\begin{example}\label{ex:rm}
A simple example of a step-wise sampling-based reduction is the product construction used to translate reward machines to regular reward functions \cite{icarte2018using}. Let $\Rc = (U, u_0, \delta_u, \delta_r)$. Then, we have $\bar{S} = S\times U$, $\bar{A} = A$, $\bar{s}_0 = (s_0, u_0)$, $\bar{L}(s,u) = L(s)$, $\beta(s, u) = s$, $\alpha(a)(a') = \mathbbm{1}(a'=a)$, $q_1 = 0$, and $q_2((s,u), a, a', (s', u')) = \mathbbm{1}(u' = \delta_u(u, s'))$. The specification $\p'$ is a reward function given by $R((s,u), a, (s',u')) = \delta_r(u)(s, a, s')$, and $f(\bar{\pi})$ is a policy that keeps track of the reward machine state and acts according to $\bar{\pi}$.
\end{example}

Given an MDP $\M = (S,A,s_0,P,L)$ and a specification $\p$, the reduction $\F$ defines a unique triplet $(\bar{\M},\p',f)$ with $\bar{\M}=(\bar{S}, \bar{A}, \bar{s}_0, \bar{P}, \bar{L})$, where $\bar{S}, \bar{A}, \bar{s}_0, \bar{L}, f$ and $\p'$ are obtained by applying $\F$ to $(S, A, s_0, L, \p)$ and $\bar{P}$ is defined by Equation~\ref{eq:newprob}. We let $\F(\M,\p)$ denote the triplet $(\bar{\M},\p',f)$. Given a simulator $\S$ of $\M$, we can construct a simulator $\bar{\S}$ of $\bar{\M}$ as follows.

\begin{algorithm}[tb]
\caption{Step function of the simulator $\bar{\S}$ of $\bar{\M}$ given $\beta$, $\alpha$, $q_1$, $q_2$ and a simulator $\S$ of $\M$.}
\label{alg:step_mbar}
\begin{algorithmic}
\FUNCTION{$\bar{\S}.\code{step(}\bar{a}\code{)}$}
\STATE $\bar{s}\gets\bar{\S}.\code{state}$ 
\STATE $p \gets\sum_{\bar{s}'}q_{1}(\bar{s},\bar{a}, \bar{s}')$
\STATE $x\sim\code{Uniform}(0,1)$
\IF{$x\leq p$}
\STATE $\bar{\S}.\code{state}\gets \bar{s}'\sim \dfrac{q_1(\bar{s}, \bar{a}, \bar{s}')}{p}$
\ELSE
\STATE $a\sim\alpha(\bar{s},\bar{a})$
\STATE $s'\gets\S.\code{step(}a\code{)}$
\STATE $\bar{\S}.\code{state}\gets \bar{s}'\sim \dfrac{q_2(\bar{s}, \bar{a}, a, \bar{s}')\mathbbm{1}(\beta(\bar{s}')=s')}{1 - p}$\hfill\COMMENT{Ensures $\beta(\bar{s}') = s'$}
\ENDIF
\STATE \textbf{return} $\bar{\S}.\code{state}$
\ENDFUNCTION
\end{algorithmic}
\end{algorithm}

\begin{description}
\item[\rm $\bar{\S}.\code{reset()}$:] This function internally sets the current state of the MDP to $\bar{s}_0$ and calls the reset function of $\M$.

\item[\rm $\bar{\S}.\code{step(}\bar{a}\code{)}$:] This function is outlined in Algorithm~\ref{alg:step_mbar}. We use $\bar{s}'\sim \Delta(\bar{s}')$ to denote that $\bar{s}'$ is sampled from the distribution defined by $\Delta$. It takes a step without calling $\S.\code{step}$ with probability $p$. Otherwise, it samples an action $a$ according to $\alpha(\bar{s},\bar{a})$, calls $\S.\code{step(}a\code{)}$ to get next state $s'$ of $\M$ and then samples an $\bar{s}'$ satisfying $\beta(\bar{s}')=s'$ based on $q_2$. Equation~\ref{eq:norm} ensures that $\frac{q_2}{1-p}$ defines a valid distribution over $\beta^{-1}(s')$.
\end{description}

We call the reduction step-wise since at most one transition of $\M$ can occur during a transition of $\bar{\M}$. Under this assumption, we justify the general form of $\bar{P}$. Let $\bar{s}$ and $\bar{a}$ be fixed. Let $X_{\bar{S}}$ be a random variable denoting the next state in $\bar{\M}$ and $X_{A}$ be a random variable denoting the action taking in $\M$ (it takes a dummy value $\bot\notin A$ when no step in $\M$ is taken). Then, for any $\bar{s}'\in\bar{S}$, we have
$$\Pr[X_{\bar{S}} = \bar{s}'] = \Pr[X_{\bar{S}} = \bar{s}'\land X_{A} = \bot] + \sum_{a\in A}\Pr[X_{\bar{S}} = \bar{s}'\land X_{A} = a].$$
Now, we have
\begin{align*}
&\Pr[X_{\bar{S}} = \bar{s}'\land X_{A} = a]\\  
&=\Pr[X_{A}=a]\Pr[X_{\bar{S}} = \bar{s}'\mid X_{A} = a]\\
&=\Pr[X_{A}=a]\Pr[\beta(X_{\bar{S}}) = \beta(\bar{s}')\mid X_A=a]\Pr[X_{\bar{S}} = \bar{s}'\mid X_{A} = a, \beta(X_{\bar{S}}) = \beta(\bar{s}')]\\
&= {P(\beta(\bar{s}), a, \beta(\bar{s}'))}\cdot\Pr[X_{A}=a]\Pr[X_{\bar{S}} = \bar{s}'\mid X_{A} = a, \beta(X_{\bar{S}}) = \beta(\bar{s}')].
\end{align*}
Taking $q_1(\bar{s},\bar{a},\bar{s}') = \Pr[X_{\bar{S}} = \bar{s}'\land X_{A} = \bot]$, $\alpha(\bar{s},\bar{a})(a) = \Pr[X_{A}=a]/\Pr[X_{A}\neq\bot]$, and $q_2(\bar{s},\bar{a},a,\bar{s}') = \Pr[X_{\bar{S}} = \bar{s}'\mid X_{A} = a, \beta(X_{\bar{S}}) = \beta(\bar{s}')]\cdot \Pr[X_{A}\neq\bot]$, we obtain the form of $\bar{P}$ in Definition~\ref{def:sim}. Note that Equation~\ref{eq:norm} holds since both sides evaluate to $\Pr[X_{A}\neq\bot]$.

To be precise, it is also possible to reset the MDP $\M$ to $s_0$ in the middle of a run of $\bar{\M}$. This can be modeled by taking $\alpha(\bar{s},\bar{a})$ to be a distribution over $A\times\{0,1\}$, where $(a,0)$ represents taking action $a$ in the current state $\beta(\bar{s})$ and $(a,1)$ represents taking action $a$ in $s_0$ after a reset. We would also have $q_2:\bar{S}\times\bar{A}\times A\times\{0,1\}\times\bar{S}\to [0,1]$ and furthermore $q_1(\bar{s},\bar{a}, \bar{s}')$ can be nonzero if $\beta(\bar{s}') = s_0$. For simplicity, we use Definition~\ref{def:sim} without considering resets in $\M$ during a step of $\bar{\M}$. However, the discussions in the rest of the paper apply to the general case as well. Now we define the \emph{optimality preservation} criterion for sampling-based reductions.

\begin{definition}
A step-wise sampling-based reduction $\F$ is optimality preserving if for any RL task $(\M,\p)$ letting $(\bar{\M},\p',f) = \F(\M,\p)$ we have $f(\Pi_{\opt}(\bar{\M},\p'))\subseteq\Pi_{\opt}(\M,\p)$ where $f(\Pi) = \{f(\pi)\mid\pi\in\Pi\}$ for a set of policies $\Pi$.
\end{definition}

It is easy to see that the reduction in Example~\ref{ex:rm} is optimality preserving for both discounted-sum and limit-average rewards since $J_{\p'}^{\bar{\M}}(\bar{\pi}) = J_{\p}^{\M}(f(\bar{\pi}))$ for any policy $\bar{\pi}\in\Pi(\bar{S}, \bar{A})$. Another interesting observation is that we can reduce discounted-sum rewards with multiple discount factors $\gamma:S\to]0,1[$ to the usual case with a single discount factor.

\begin{theorem}\label{thm:twodiscount}
There is an optimality preserving step-wise sampling-based reduction $\F$ such that for any $\M=(S,A,s_0,P)$ and $\p = (R,\gamma)$, where $R:S\times A\times S\to\R$ and $\gamma:S\to]0,1[$, we have $f(\M,\p) = (\bar{\M}, \p', f)$, where $\p' = (R',\gamma')$, with $R':\bar{S}\times \bar{A}\times \bar{S}\to\R$ and $\gamma'\in]0,1[$.
\end{theorem}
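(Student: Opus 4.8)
The goal is to encode a per-state discount function $\gamma: S \to\ ]0,1[$ using a single global discount factor $\gamma' \in\ ]0,1[$. The standard trick is to interpret discounting as a probability of termination: a factor $\gamma(s)$ in state $s$ means ``continue with probability $\gamma(s)$, otherwise stop accumulating reward.'' The plan is to pick any $\gamma' > \max_{s} \gamma(s)$ (for concreteness, say $\gamma' = \tfrac{1}{2}(1 + \max_s \gamma(s))$, which lies in $]0,1[$), and then add a fresh absorbing ``sink'' state $s_\dagger$ to $\bar S$ with zero reward on all its outgoing transitions. From a state $\bar s$ with $\beta(\bar s) = s$, after a transition of $\M$ lands in $s'$, we route to the corresponding state of $\bar S$ with probability $\gamma(s)/\gamma'$ and to $s_\dagger$ with the remaining probability $1 - \gamma(s)/\gamma'$. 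Since $\gamma(s) < \gamma'$ this is a valid split. Concretely in the notation of Definition~\ref{def:sim}: $\bar S = (S \times\{0\}) \cup \{s_\dagger\}$ (the $\{0\}$ tag just to keep the sink disjoint), $\bar A = A$, $\bar s_0 = (s_0,0)$, $\beta(s,0) = s$ and $\beta(s_\dagger)$ equal to some arbitrary fixed state, $\alpha((s,0),a) = \delta_a$, $q_1 = 0$ on the non-sink part, $q_2((s,0),a,a',(s',0)) = \gamma(s)/\gamma'$ and $q_2((s,0),a,a',s_\dagger) = 1 - \gamma(s)/\gamma'$; from $s_\dagger$ everything self-loops. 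The new reward $R'$ equals $R$ on transitions between non-sink states and is $0$ on any transition touching $s_\dagger$. The policy map $f(\bar\pi)$ simply plays $\bar\pi$ along the lifted run (which, before hitting the sink, projects bijectively onto a run of $\M$).

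**Key steps, in order.** First I would verify this is a legitimate step-wise sampling-based reduction: $\bar S, \bar A, \bar s_0, \bar L, f$ are manifestly independent of $P$; $\beta(\bar s_0) = s_0$; $q_1 \equiv 0$ so the constraint $q_1(\bar s,\bar a,\bar s') = 0$ when $\beta(\bar s) \neq \beta(\bar s')$ is vacuous; and the normalization~\eqref{eq:norm} holds because for fixed $\bar s = (s,0)$, $\bar a$, $a$, summing $q_2$ over $\beta^{-1}(s')$ gives $\gamma(s)/\gamma'$ for $s' \neq \beta(s_\dagger)$ and picks up the extra $1 - \gamma(s)/\gamma'$ for $s' = \beta(s_\dagger)$, totalling $1$; by Lemma~\ref{lem:validprob} the induced $\bar P$ is then a valid transition function. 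Second, and this is the heart, I would establish the value identity $J^{\bar\M}_{\p'}(\bar\pi) = J^{\M}_{\p}(f(\bar\pi))$ for every $\bar\pi \in \Pi(\bar S,\bar A)$. Fix $\bar\pi$ and let $\pi = f(\bar\pi)$. Couple a run of $\bar\M$ under $\bar\pi$ with a run of $\M$ under $\pi$ by using the same action samples and the same $\M$-transition samples; the $\bar\M$-run agrees with the lifted $\M$-run up to the (geometric-ish) time $T$ at which it jumps to $s_\dagger$, and $\Pr[T > i \mid \text{the first } i \text{ states of } \M \text{ are } s_0 s_1\cdots s_i] = \prod_{j=0}^{i-1} \gamma(s_j)/\gamma'$. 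Then for the $\bar\M$-run,
\begin{align*}
\E\bigl[R'_{\gamma'}(\bar\zeta)\bigr]
&= \E\Bigl[\sum_{i=0}^{\infty} (\gamma')^{i}\, R'(\bar\zeta_i)\, \mathbbm{1}(T > i)\Bigr]
= \E\Bigl[\sum_{i=0}^{\infty} (\gamma')^{i} \Bigl(\prod_{j=0}^{i-1}\tfrac{\gamma(s_j)}{\gamma'}\Bigr) R(s_i,a_i,s_{i+1})\Bigr]\\
&= \E\Bigl[\sum_{i=0}^{\infty} \Bigl(\prod_{j=0}^{i-1}\gamma(s_j)\Bigr) R(s_i,a_i,s_{i+1})\Bigr]
= J^{\M}_{\p}(\pi),
\end{align*}
where pulling the $\mathbbm{1}(T>i)$ out as its conditional expectation uses that the sink-jump at step $i$ is independent of the reward $R(s_i,a_i,s_{i+1})$ given the $\M$-trajectory, and the $(\gamma')^i$ cancels the $(1/\gamma')^i$ from the product. (Dominated convergence / absolute summability is fine since $R$ is bounded on the finite MDP and $\prod \gamma(s_j) \le (\max_s \gamma(s))^i$ decays geometrically.) Third, I would note that this identity is a bijection on values in the sense needed for optimality preservation: since $f$ is surjective onto positional-plus-memory policies and every policy of $\M$ is realized, $\arg\max_{\bar\pi} J^{\bar\M}_{\p'}(\bar\pi)$ maps under $f$ into $\arg\max_\pi J^{\M}_\p(\pi)$, i.e. $f(\Pi_{\opt}(\bar\M,\p')) \subseteq \Pi_{\opt}(\M,\p)$ — here one should be slightly careful that $f(\bar\pi)$ may be a history-dependent policy but $\Pi_{\opt}(\M,\p)$ is defined over all of $\Pi(S,A)$, so no issue arises. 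Finally, confirm $\gamma' \in\ ]0,1[$ and that $\F$ is computable, which is immediate.

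**Main obstacle.** The reduction itself is straightforward; the one place demanding care is the value-identity computation — specifically the interchange of expectation and infinite sum and the claim that the sink-termination event at step $i$ is conditionally independent of the step-$i$ reward given the $\M$-trajectory. I would handle the independence by making the coupling explicit (sample the $\M$-transition first, then independently flip the $\gamma(s_i)/\gamma'$ coin to decide whether to jump to $s_\dagger$), and handle the interchange by absolute summability using boundedness of $R$ on the finite state/action sets together with geometric decay of $\prod_j \gamma(s_j)$. A secondary, purely cosmetic, subtlety is the choice of $\beta(s_\dagger)$: since the sink is absorbing with zero reward its $\beta$-image is irrelevant to the value computation, so any fixed choice works, but one must confirm the $q_1 = 0$, $q_2$-normalization bookkeeping still goes through for that choice — which it does, as shown above. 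There is also a minor design question of whether $f$ should output a finite-memory policy tracking ``have we hit the sink''; since once the sink is hit the corresponding $\M$-run is an arbitrary tail that contributes nothing, $f(\bar\pi)$ can be defined to simply replay $\bar\pi$'s action on the unique lifted prefix and act arbitrarily thereafter, and this does not affect membership in $\Pi_{\opt}(\M,\p)$.
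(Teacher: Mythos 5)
Your construction is the same sink-state/termination-probability encoding the paper uses, but it is missing one essential ingredient, and the omission is concealed by an off-by-one error in your value computation. In your reduction the reward of the $i$-th transition is collected only if the sink-coin flipped \emph{during that very transition} comes up ``continue,'' i.e.\ only on the event $T>i+1$, not $T>i$. Since $R'$ is $0$ on any transition touching $s_\dagger$, you have $R'(\bar\zeta_i)=R(s_i,a_i,s_{i+1})\,\mathbbm{1}(T>i+1)$, so the correct conditional expectation given the $\M$-trajectory is $\prod_{j=0}^{i}\gamma(s_j)/\gamma'$ times $R(s_i,a_i,s_{i+1})$, not $\prod_{j=0}^{i-1}\gamma(s_j)/\gamma'$ times it. Carrying this through, your reduced value is
\begin{equation*}
J^{\bar\M}_{\p'}(\bar\pi)\;=\;\E\Bigl[\sum_{i=0}^{\infty}\Bigl(\prod_{j=0}^{i-1}\gamma(s_j)\Bigr)\,\frac{\gamma(s_i)}{\gamma'}\,R(s_i,a_i,s_{i+1})\Bigr],
\end{equation*}
which is the original MDP's value for the \emph{modified} reward $\gamma(s)R(s,a,s')$ rather than $R(s,a,s')$. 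Because the extra factor $\gamma(s_i)$ depends on the state where the reward is earned, this changes the argmax: e.g.\ if from $s_0$ action $a_1$ leads to $s_1$ with $\gamma(s_1)=0.1$ and step-$1$ reward $1$, while $a_2$ leads to $s_2$ with $\gamma(s_2)=0.9$ and step-$1$ reward $0.5$, the original task prefers $a_1$ but your reduced task prefers $a_2$. So optimality preservation genuinely fails.

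The fix is exactly the step your proposal skips: rescale the reward to cancel the survival factor of the current step, setting $R'(s,a,s')=\frac{\gamma'}{\gamma(s)}R(s,a,s')$ on non-sink transitions (the paper takes $\gamma'=\gamma_{\max}=\max_s\gamma(s)$, which also lies in $]0,1[$, so your strict inequality $\gamma'>\max_s\gamma(s)$ is unnecessary). With that rescaling your coupling argument goes through verbatim and yields the exact identity $J^{\bar\M}_{\p'}(\bar\pi)=J^{\M}_{\p}(f(\bar\pi))$; everything else in your write-up (the verification of Definition~\ref{def:sim}, the normalization check, the handling of $\beta(s_\dagger)$, and the summability argument) is fine.
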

\begin{proof}
Let $\bar{S} = S\sqcup\{s_\bot\}$, where $s_\bot$ is a new sink state, $\bar{A} = A$, and $\bar{s}_0 = s_0$. We set $\gamma'=\gamma_{\max} = \max_{s\in S}\gamma(s)$, and define $R'$ by $R'(s,a,s') = \frac{\gamma_{\max}}{\gamma(s)}R(s,a,s')$ if $s,s'\in S$ and $0$ otherwise. We define $\bar{P}(s_{\bot},a,s_{\bot}) = 1$ for all $a\in A$. For any $s\in S$, we have $\bar{P}(s,a,s') = \frac{\gamma(s)}{\gamma_{\max}}P(s,a,s')$ if $s'\in S$ and $\bar{P}(s,a,s_{\bot}) = 1-\frac{\gamma(s)}{\gamma_{\max}}$. Intuitively, $\bar{\M}$ transitions to the sink state $s_{\bot}$ with probability $1-\frac{\gamma(s)}{\gamma_{\max}}$ from any state $s$ on taking any action $a$ which has the effect of reducing the discount factor from $\gamma_{\max}$ to $\gamma(s)$ in state $s$ since all future rewards are $0$ after transitioning to $s_\bot$. Although we explicitly defined $\bar{P}$, note that it has the general form of Equation~\ref{eq:newprob} and can be sampled from without knowing $P$. Now, we take $\p' = (R',\gamma')$, and $f(\bar{\pi})$ to be $\bar{\pi}$ restricted to $\traj_{f}(S,A)$. It is easy to see that for any $\bar{\pi}\in\Pi(\bar{S},A)$, we have $J_{\p'}^{\bar{\M}}(\bar{\pi}) = J_{\p}^{\M}(f(\bar{\pi}))$; therefore, this reduction preserves optimality.\qed
\end{proof}

\subsection{Reductions from Temporal Logic Specifications}\label{sec:existing}

\noindent A number of strategies have been recently proposed for learning policies from temporal specifications by reducing them to reward-based specifications. For instance, \cite{aksaray2016q} proposes a reduction from Signal Temporal Logic (STL) specifications to rewards in the finite horizon setting---i.e., the specification $\varphi$ is evaluated over a fixed $T_{\varphi}$-length prefix of the rollout $\zeta$.

The authors of \cite{hasanbeig2019,hasanbeig2018logically} propose a reduction from LTL specifications to discounted rewards which proceeds by first constructing a product of the MDP $\M$ with a Limit Deterministic B\"uchi automaton (LDBA) $\A_{\varphi}$ derived from the LTL formula $\varphi$ and then generates transition-based rewards in the product MDP. The strategy is to assign a fixed positive reward of $r$ when an accepting state in $\A_{\varphi}$ is reached and $0$ otherwise. As shown in \cite{moritz2019}, this strategy does not always preserve optimality if the discount factor $\gamma$ is required to be strictly less that one.
Similar approaches are proposed in \cite{yuan2019modular, spectrl, ijcai2019ltl}, though they do not provide optimality preservation guarantees.

A recent paper \cite{moritz2019} presents a step-wise sampling-based reduction from LTL specifications to limit-average rewards. It first constructs an LDBA $\A_{\varphi}$ from the LTL formula $\varphi$ and then considers a product $\M\otimes\A_{\varphi}$ of the MDP $\M$ with $\A_{\varphi}$ in which the nondeterminism of $\A_{\varphi}$ is handled by adding additional actions that represent the choice of possible transitions in $\A_{\varphi}$ that can be taken. Now, the reduced MDP $\bar{\M}$ is obtained by adding an additional sink state $\bar{s}_{\bot}$ with the property that whenever an accepting state of $\A_{\varphi}$ is reached in $\bar{\M}$, there is a $(1-\lambda)$ probability of transitioning to $\bar{s}_{\bot}$ during the next transition in $\bar{\M}$. They show that for a large enough value of $\lambda$, any policy maximizing the probability of reaching $\bar{s}_{\bot}$ in $\bar{\M}$ can be used to construct a policy that maximizes $J^{\M}_{\L_{\ltl}(\varphi)}$. As shown before, this reachability property in $\bar{\M}$ can be translated to limit-average rewards. The main drawback of this approach is that the lower bound on $\lambda$ for preserving optimality depends on the transition probability function $P$; hence, it is not possible to correctly pick the value of $\lambda$ without the  knowledge of $P$. A heuristic used in practice is to assign a default large value to $\lambda$. Their result can be summarized as follows.
\begin{theorem}[\cite{moritz2019}] There is a family of step-wise sampling-based reductions $\{\F_{\lambda}\}_{\lambda\in]0,1[}$ such that for any MDP $\M$ and LTL specification $\p = \L_{\ltl}(\varphi)$, there exists a $\lambda_{\M,\p}\in]0,1[$ such that for all $\lambda\geq \lambda_{\M,\p}$, letting $(\bar{\M}_{\lambda}, \p'_{\lambda}, f_{\lambda}) = \F_{\lambda}(\M,\p)$, we have $f_{\lambda}(\Pi_{\opt}(\bar{\M}_{\lambda},\p'_{\lambda}))\subseteq\Pi_{\opt}(\M,\p)$ and $\p'_{\lambda}=R_{\lambda}:S\times A\times S\to\R$ is a limit-average reward specification.
\end{theorem}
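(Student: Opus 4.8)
The plan is to realise the reduction of \cite{moritz2019} as a step-wise sampling-based reduction in the sense of Definition~\ref{def:sim}, and then to fold the resulting reachability objective into a limit-average reward using the reachability-to-reward-machine translation established earlier in this section. I would first fix a limit-deterministic B\"uchi automaton $\A_\varphi$ recognising $\L_{\ltl}(\varphi)$ and build the product of $\M$ with $\A_\varphi$, resolving the (restricted) nondeterminism of $\A_\varphi$ by enlarging the action set: $\bar A$ contains the actions of $A$ together with fresh ``jump'' actions that only advance the automaton component and do not trigger a step in $\M$. Thus $\bar S$ is the set of product states plus, for every reachable $s \in S$, a fresh absorbing state $\bar s_\bot^{(s)}$ with $\beta(\bar s_\bot^{(s)}) = s$; $\beta$ projects every other state onto its $\M$-component, $\beta(\bar s_0) = s_0$ holds by construction, and $f_\lambda(\bar\pi)$ is the $\M$-policy that internally tracks the automaton and the jump choices prescribed by $\bar\pi$ and plays $\bar\pi$'s named $\M$-action on every genuine step.

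Next I would introduce the parameter: whenever the automaton component of the current product state is accepting, the next transition goes to $\bar s_\bot^{(\beta(\bar s))}$ with probability $1-\lambda$ and otherwise performs the ordinary product transition. This fits Equation~\ref{eq:newprob} exactly --- the cash-out move is a $q_1$-transition (it is $P$-independent and $\beta$-preserving, as required, since both endpoints share the same $\M$-component), while the scaled-by-$\lambda$ product transitions are carried by $q_2$ with $\alpha(\bar s,\bar a)$ echoing the $\M$-action named by $\bar a$; the jump actions are likewise pure $q_1$-transitions. The normalisation condition in Equation~\ref{eq:norm} is immediate and Lemma~\ref{lem:validprob} then certifies that $\bar P$ is a transition probability function. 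I take $\p'$ to be the reachability language $\L_{\reach}(\{b_\bot\})$, where a new proposition $b_\bot$ labels exactly the cash-out states; composing this reduction with the reachability-to-(limit-average) reward-machine translation proved above (cf.\ Figure~\ref{fig:limavgrm}) and the product construction of Example~\ref{ex:rm} turns $\p'$ into a transition-based limit-average reward $R_\lambda$, and optimality preservation is inherited under such composition since $J$-values are carried through unchanged at each of those last two steps.

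It then remains to exhibit the threshold $\lambda_{\M,\p}$ and prove $f_\lambda(\Pi_{\opt}(\bar\M_\lambda,\p'_\lambda)) \subseteq \Pi_{\opt}(\M,\p)$ for $\lambda \ge \lambda_{\M,\p}$; this is the heart of \cite{moritz2019} and the step I expect to dominate the work. The natural route is a coupling argument: a run of $\bar\M_\lambda$ under $\bar\pi$ follows the associated product run until the first ``heads'' of a $\mathrm{Bernoulli}(1-\lambda)$ coin tossed at each visit to an accepting automaton state, whereupon it is absorbed into a cash-out state. Hence the probability of reaching a cash-out state equals $\Pr_{\bar\pi}[\text{accepting states visited infinitely often}] + \E_{\bar\pi}\big[(1-\lambda^{K})\,\mathbbm{1}(K<\infty)\big]$, where $K$ denotes the (random) number of accepting visits. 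The first term is at most $\Pr^{\M}_{f_\lambda(\bar\pi)}[L(\zeta) \models \varphi]$ by soundness of $\A_\varphi$ (the jump actions do not alter the $\M$-trace), and equals $\J^*(\M,\p)$ for a suitable $\bar\pi$ by limit-determinism, so the optimal cash-out probability $V_\lambda$ satisfies $V_\lambda \ge \J^*(\M,\p)$; the second term is a nonnegative ``early cash-out bonus'' that a policy can inflate by visiting accepting states many, but finitely many, times.

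The crux --- and where the dependence of $\lambda_{\M,\p}$ on $P$ unavoidably enters --- is to bound this bonus so that, past the threshold, no cash-out-optimal policy can afford a positive bonus. Since $\M$ is finite, I would restrict to memoryless policies for the reachability objective, control the expected number of accepting visits along runs that never enter an accepting end component in terms of $|S|$, $|\varphi|$ and the smallest positive transition probability of $\M$, and conclude that for $\lambda$ close enough to $1$ any policy with a positive bonus is strictly suboptimal for reaching $b_\bot$. Consequently every $\bar\pi \in \Pi_{\opt}(\bar\M_\lambda,\p'_\lambda)$ must drive the product into an accepting end component with the maximal possible probability, hence $\Pr^{\M}_{f_\lambda(\bar\pi)}[L(\zeta)\models\varphi]=\J^*(\M,\p)$ and $f_\lambda(\bar\pi) \in \Pi_{\opt}(\M,\p)$; I would carry out this quantitative estimate following \cite{moritz2019}.
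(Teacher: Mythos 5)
Your construction matches the one the paper describes for this cited result: a product with an LDBA whose nondeterminism is resolved by extra actions, a $(1-\lambda)$-probability cash-out at accepting states realised through $q_1$, reachability of the cash-out states as the intermediate objective composed with the reachability-to-limit-average translation, and the quantitative threshold $\lambda_{\M,\p}$ (necessarily depending on $P$) deferred to \cite{moritz2019}, exactly as the paper itself does. Your replacement of the single sink $\bar{s}_\bot$ by one absorbing cash-out state per $\M$-state is a correct and in fact necessary refinement so that the cash-out move is a legal $q_1$-transition under the requirement that $q_1$ vanish when $\beta(\bar{s})\neq\beta(\bar{s}')$.
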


{The authors of \cite{atva} show that the above approach can be modified to get less sparse rewards with similar guarantees using two discount factors $\gamma_1<1$ and $\gamma_2=1$ (where $\gamma_2=1$ is only used in steps at which the reward is zero).}

Another approach \cite{bozkurt2020control} with an optimality preservation guarantee reduces LTL specifications to discounted rewards with two discount factors $\gamma_1 < \gamma_2 < 1$ which are applied in different states. This approach uses the product $\M\times\A_{\varphi}$ as $\bar{\M}$ and assigns a reward of $1-\gamma_1$ to the accepting states (where discount factor $\gamma_1$ is applied) and $0$ to the remaining states (where discount factor $\gamma_2$ is applied). Applying Theorem~\ref{thm:twodiscount} we get the following result as a corollary of the optimality preservation guarantee of this approach.

\begin{theorem}[\cite{bozkurt2020control}] There is a family of step-wise sampling-based reductions $\{\F_{\gamma}\}_{\gamma\in]0,1[}$ such that for any MDP $\M$ and LTL specification $\p = \L_{\ltl}(\varphi)$, there exists $\gamma_{\M,\p}\in]0,1[$ such that for all $\gamma\geq \gamma_{\M,\p}$, letting $(\bar{\M}_{\gamma}, \p'_{\gamma}, f_{\gamma}) = \F_{\gamma}(\M,\p)$, we have $f_{\gamma}(\Pi_{\opt}(\bar{\M}_{\gamma},\p'_{\gamma}))\subseteq\Pi_{\opt}(\M,\p)$ and $\p'_{\gamma}=(R_{\gamma}, \gamma)$ is a discounted-sum reward specification.
\end{theorem}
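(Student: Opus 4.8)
The plan is to obtain $\F_\gamma$ by composing the two-discount-factor reduction of \cite{bozkurt2020control} with the single-discount-factor reduction of Theorem~\ref{thm:twodiscount}, so that the statement becomes a corollary of those two facts together with the observation that step-wise sampling-based reductions compose.

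First I would recall the \cite{bozkurt2020control} construction: from $\M=(S,A,s_0,P,L)$ and $\p=\L_{\ltl}(\varphi)$ one builds a limit-deterministic B\"uchi automaton $\A_\varphi$, forms the product $\bar{\M}^{(1)}=\M\times\A_\varphi$ (with additional actions resolving the automaton's nondeterminism), and equips it with the Markovian reward that is $1-\gamma_1$ at accepting states and $0$ elsewhere, together with the state-dependent discount $\gamma^{(1)}$ equal to $\gamma_1$ at accepting states and $\gamma_2$ elsewhere, where $\gamma_1<\gamma_2<1$. Exactly as for the product construction in Example~\ref{ex:rm}, this is a step-wise sampling-based reduction (at most one transition of $\M$ is sampled per step), with an accompanying policy map $f^{(1)}:\Pi(\bar{S}^{(1)},\bar{A}^{(1)})\to\Pi(S,A)$. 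I would record the guarantee of \cite{bozkurt2020control} as: there is a threshold depending only on $\M$ and $\varphi$ such that whenever the pair $(\gamma_1,\gamma_2)$ lies in the corresponding ``good region'' (with $\gamma_1<\gamma_2$), $f^{(1)}\big(\Pi_{\opt}(\bar{\M}^{(1)},(R^{(1)},\gamma^{(1)}))\big)\subseteq\Pi_{\opt}(\M,\p)$, and this region contains pairs with $\gamma_2$ arbitrarily close to $1$.

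Next I would fix a schedule $\gamma\mapsto\gamma_1(\gamma)$ with $\gamma_1(\gamma)<\gamma$ for every $\gamma\in\,]0,1[$ and $\gamma_1(\gamma)\to1$ as $\gamma\to1$; concretely $\gamma_1(\gamma)=1-\sqrt{1-\gamma}$ works. Then $\F_\gamma$ is defined to first apply the \cite{bozkurt2020control} reduction with parameters $(\gamma_1(\gamma),\gamma)$ and then apply the reduction of Theorem~\ref{thm:twodiscount} to $(\bar{\M}^{(1)},(R^{(1)},\gamma^{(1)}))$; since the only discount values occurring are $\gamma_1(\gamma)<\gamma$, the latter produces $\bar{\M}_\gamma$ with state set $\bar{S}^{(1)}\sqcup\{s_\bot\}$, a Markovian reward $R_\gamma:\bar{S}\times\bar{A}\times\bar{S}\to\R$, single discount factor $\gamma_{\max}=\gamma$, and a policy map $f^{(2)}:\Pi(\bar{S},\bar{A})\to\Pi(\bar{S}^{(1)},\bar{A}^{(1)})$ (note $\bar{A}=\bar{A}^{(1)}$). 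The remaining steps, in order, are: (i) check that the composition of two step-wise sampling-based reductions is again step-wise --- one step of the outermost MDP triggers at most one step of the middle MDP, which triggers at most one step of $\M$ --- and that $\bar{P}$ is valid, which follows from Lemma~\ref{lem:validprob} applied twice, so $\F_\gamma$ meets Definition~\ref{def:sim}; (ii) invoke Theorem~\ref{thm:twodiscount}, whose reduction satisfies $J^{\bar{\M}_\gamma}_{(R_\gamma,\gamma)}(\bar{\pi})=J^{\bar{\M}^{(1)}}_{(R^{(1)},\gamma^{(1)})}(f^{(2)}(\bar{\pi}))$ for every $\bar{\pi}$, to get $f^{(2)}\big(\Pi_{\opt}(\bar{\M}_\gamma,(R_\gamma,\gamma))\big)\subseteq\Pi_{\opt}(\bar{\M}^{(1)},(R^{(1)},\gamma^{(1)}))$; (iii) pick $\gamma_{\M,\p}<1$ large enough that $(\gamma_1(\gamma),\gamma)$ lies in the good region of \cite{bozkurt2020control} for all $\gamma\geq\gamma_{\M,\p}$; (iv) chain the inclusions: for $\bar{\pi}\in\Pi_{\opt}(\bar{\M}_\gamma,(R_\gamma,\gamma))$ we get $f^{(2)}(\bar{\pi})\in\Pi_{\opt}(\bar{\M}^{(1)},(R^{(1)},\gamma^{(1)}))$ and hence $f_\gamma(\bar{\pi}):=f^{(1)}(f^{(2)}(\bar{\pi}))\in\Pi_{\opt}(\M,\p)$, as required, with $\p'_\gamma=(R_\gamma,\gamma)$ of the claimed single-discount form.

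The hard part is step (iii). The optimality-preserving region of \cite{bozkurt2020control} is not simply ``both discount factors close to $1$'' --- indeed $\gamma_1<\gamma_2$ forces $\gamma_1$ to be \emph{farther} from $1$ --- and their argument requires $1-\gamma_1$ to be small relative to a quantity governed by $1-\gamma_2$, the minimum nonzero transition probability of $\M$, and $|\bar{S}^{(1)}|$; hence the schedule must vanish at the right rate, and $\gamma_1(\gamma)=1-\sqrt{1-\gamma}$ is chosen precisely so that $1-\gamma_1(\gamma)=\sqrt{1-\gamma}$ dominates $1-\gamma$ as $\gamma\to1$. Everything else --- validity of $\bar{P}$, composability of $f^{(1)}$ and $f^{(2)}$, carrying the labels along in the reduction of Theorem~\ref{thm:twodiscount}, and the form of $\p'_\gamma$ --- is routine plumbing given the two reductions being composed.
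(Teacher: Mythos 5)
Your proposal matches the paper's approach exactly: the paper presents this theorem as a corollary obtained by composing the two-discount-factor construction of \cite{bozkurt2020control} with the reduction of Theorem~\ref{thm:twodiscount}, which is precisely your plan. Your additional care about fixing a schedule $\gamma_1(\gamma)$ so that the pair lands in the cited work's optimality-preserving region is a detail the paper leaves implicit, but it does not change the argument.
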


Similar to \cite{moritz2019}, the optimality preservation guarantee only applies to large enough $\gamma$, and the lower bound on $\gamma$ depends on the transition probability function $P$.

To the best of our knowledge, it is unknown if there exists an optimality preserving step-wise sampling-based reduction from LTL specifications to reward-based specifications that is completely independent of $P$.

\begin{open}
Does there exist an optimality preserving step-wise sampling-based reduction $\F$ such that for any RL task $(\M,\p)$ where $\p$ is an LTL specification, letting $(\bar{\M},\p',f)=\F(\M,\p)$, we have that $\p'$
is a reward-based specification (either limit-average or discounted-sum)?
\end{open}
\section{Robustness}
\label{sec:robust}

A key property of discounted reward specifications that is exploited by RL algorithms is \emph{robustness}.
In this section, we discuss the concept of robustness for specifications as well as reductions. We show that robust reductions from LTL specifications to discounted rewards are not possible due to the fact that LTL specifications are not robust.

\subsection{Robust Specifications}
\label{sec:robustspec}

A specification $\p$ is said to be \emph{robust} \citep{littman2017environment} if an optimal policy for $\p$ in an estimate $\M'$ of the MDP $\M$ achieves close to optimal performance in $\M$. Formally,
an MDP $\M = (S, A, s_0, P, L)$ is said to be {\em $\delta$-close} to another MDP ${\M}' = (S, A, {s}_0, {P}', {L})$ if their states, actions, initial states, and labeling functions are identical and their transition probabilities differ by at most a $\delta$ amount---i.e., $$|P(s,a,s') - {P}'(s,a,s')| \leq \delta$$ for all $s,s'\in S$ and $a\in A$.

\begin{definition}\label{def:robspec}
A specification $\p$ is {\em robust} if for any MDP $\M$ for which $\p$ is a valid specification and $\varepsilon>0$, there exists a $\delta_{\M, \varepsilon}>0$ such that if MDP ${\M}'$ is $\delta_{\M, \varepsilon}$-close to $\M$, then an optimal policy in ${\M}'$ is an $\varepsilon$-optimal policy in $\M$---i.e.,
$\Pi_{\opt}(\M',\p)\subseteq \Pi^{\varepsilon}_{\opt}(\M,\p).$
\end{definition}

\begin{figure}
    \centering

\begin{tikzpicture}[scale=0.14]
\tikzstyle{every node}+=[inner sep=0pt]
\draw [black] (26.6,-21.2) circle (3);
\draw (26.6,-21.2) node {$s_0$};
\draw [black] (26.6,-21.2) circle (2.4);
\draw [black] (48.3,-21.2) circle (3);
\draw (48.3,-21.2) node {$s_1$};
\draw [black] (61.6,-21.2) circle (3);
\draw (61.6,-21.2) node {$s_2$};
\draw [black] (61.6,-21.2) circle (2.4);
\draw [black] (29.142,-19.613) arc (117.24739:62.75261:18.147);
\fill [black] (45.76,-19.61) -- (45.28,-18.8) -- (44.82,-19.69);
\draw (37.45,-17.1) node [above] {$a_1\mbox{ }/\mbox{ }1-p_1$};
\draw [black] (23.92,-22.523) arc (324:36:2.25);
\draw (19.35,-21.2) node [left] {$a_1\mbox{ }/\mbox{ }p_1$};
\fill [black] (23.92,-19.88) -- (23.57,-19) -- (22.98,-19.81);
\draw [black] (29.6,-21.2) -- (45.3,-21.2);
\fill [black] (45.3,-21.2) -- (44.5,-20.7) -- (44.5,-21.7);
\draw (37.45,-21.7) node [below] {$a_2\mbox{ }/\mbox{ }p_2$};
\draw [black] (59.218,-23.021) arc (-55.79615:-124.20385:26.894);
\fill [black] (59.22,-23.02) -- (58.28,-23.06) -- (58.84,-23.88);
\draw (44.1,-28.17) node [below] {$a_2\mbox{ }/\mbox{ }1-p_2$};
\draw [black] (46.977,-18.52) arc (234:-54:2.25);
\draw (48.3,-13.95) node [above] {$A\mbox{ }/\mbox{ }1$};
\fill [black] (49.62,-18.52) -- (50.5,-18.17) -- (49.69,-17.58);
\draw [black] (60.277,-18.52) arc (234:-54:2.25);
\draw (61.6,-13.95) node [above] {$A\mbox{ }/\mbox{ }1$};
\fill [black] (62.92,-18.52) -- (63.8,-18.17) -- (62.99,-17.58);
\end{tikzpicture}

    \caption{Example showing non-robustness of $\L_{\safe}(\{b\})$.}
    \label{fig:rob}
\end{figure}
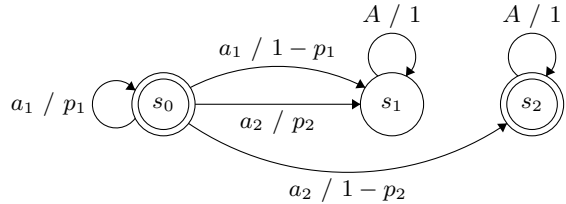
The simulation lemma in \cite{kearns2002near} proves that discounted-sum rewards are robust. On the other hand, \cite{littman2017environment} shows that language-based specifications, even safety specifications, are not robust. Here, we give a slightly modified example to show that the specification $\p=\L_{\safe}(\{b\})$ is not robust which also shows that limit-average rewards are not robust.

\begin{theorem}[\cite{littman2017environment}]\label{thm:robspec}
There exists an MDP $\M$ and a safety specification $\p$ such that, for any $\delta>0$, there is an MDP $\M_{\delta}$ that is $\delta$-close to $\M$ which satisfies $\Pi_{\opt}(\M_{\delta},\p)\cap\Pi_{\opt}^{\varepsilon}(\M,\p) = \emptyset$ for all $\varepsilon < 1$.
\end{theorem}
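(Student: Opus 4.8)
The plan is to instantiate the MDP $\M$ of Figure~\ref{fig:rob} with the extreme parameter values $p_1=p_2=1$, and to perturb it to $\M_\delta$ by setting $p_1=p_2=1-\delta$ (we may assume $\delta\in(0,1)$, since for larger $\delta$ any smaller perturbation is also $\delta$-close). Recall the dynamics: from the initial state $s_0$, action $a_1$ self-loops with probability $p_1$ and goes to the (unsafe, absorbing) state $s_1$ with probability $1-p_1$, while action $a_2$ goes to $s_1$ with probability $p_2$ and to the (safe, absorbing) state $s_2$ with probability $1-p_2$; a run satisfies $\p=\L_{\safe}(\{b\})$ iff it never visits $s_1$ (the safe states $s_0,s_2$ are double-circled in Figure~\ref{fig:rob}).

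First I would analyze $\M$. Since $p_1=1$, the policy that plays $a_1$ forever stays in $s_0$ forever and is safe with probability $1$, so $\J^*(\M,\p)=1$. Since $p_2=1$, any policy that plays $a_2$ at $s_0$ with probability $1$ reaches $s_1$ with probability $1$, hence has value $0$ in $\M$. It therefore suffices to show that every optimal policy of $\M_\delta$ plays $a_2$ at its first visit to $s_0$ with probability $1$.

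Next I would analyze $\M_\delta$. It is $\delta$-close to $\M$ by construction. In $\M_\delta$ the only transition into the safe absorbing state $s_2$ is via $a_2$ at $s_0$, and it succeeds with probability only $\delta$; once $a_2$ is played the run leaves $s_0$ forever, and playing $a_1$ instead carries a probability-$\delta$ risk of falling into $s_1$ with no compensating benefit. Consequently the $\p$-value of any policy is at most $\delta$ (it equals $(1-\delta)^k\delta$ for a policy that first plays $a_2$ at step $k$, and randomizing the time at which $a_2$ is played cannot improve on $\delta$), with equality only if the policy plays $a_2$ at the first visit to $s_0$ with probability $1$. Hence $\J^*(\M_\delta,\p)=\delta$ and every optimal policy of $\M_\delta$ commits to $a_2$ at $s_0$.

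Combining the two analyses, any $\bar\pi\in\Pi_{\opt}(\M_\delta,\p)$ plays $a_2$ at $s_0$ with probability $1$, so $J^{\M}_{\p}(\bar\pi)=0<1-\varepsilon=\J^*(\M,\p)-\varepsilon$ for every $\varepsilon<1$; thus $\bar\pi\notin\Pi_{\opt}^{\varepsilon}(\M,\p)$ and $\Pi_{\opt}(\M_\delta,\p)\cap\Pi_{\opt}^{\varepsilon}(\M,\p)=\emptyset$, as claimed. The one delicate point is the $\M_\delta$ analysis: one must rule out that some randomized or history-dependent policy achieves $\p$-value above $\delta$, and thereby pin down that all $\M_\delta$-optimal policies are forced to pick $a_2$ at $s_0$; everything else is routine arithmetic. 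Finally, since safety specifications reduce to limit-average reward machines (shown earlier), the same $\M,\M_\delta$ also witness that limit-average rewards are not robust.
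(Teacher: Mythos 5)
Your proof is correct and takes essentially the same route as the paper: the same MDP from Figure~\ref{fig:rob} with $p_1=p_2=1$ for $\M$ and $p_1=p_2=1-\delta$ for $\M_\delta$, and the same key observation that every $\M_\delta$-optimal policy must commit to $a_2$ immediately and therefore has value $0$ in $\M$. Your explicit computation that $\J^*(\M_\delta,\p)=\delta$ and the check that randomized, history-dependent policies cannot do better just fills in a step the paper asserts without detail.
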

\begin{proof}
Consider the MDP $\M$ in Figure~\ref{fig:rob} with $p_1=p_2=1$; the double circles denote states where $b$ holds. Then, an optimal policy for $\p=\L_{\safe}(\{b\})$ always selects action $a_1$ and achieves a satisfaction probability of $1$. Now let $\M_{\delta}$ denote the same MDP with $p_1=p_2 = 1-\delta$. Then, any optimal policy for $\p$ in $\M_{\delta}$ must select $a_2$ almost surely, which is not optimal for $\M$. In fact, such a policy achieves a satisfaction probability of $0$ in $\M$. Therefore, we have $\Pi_{\opt}(\M_{\delta},\p)\cap\Pi_{\opt}^{\varepsilon}(\M,\p)=\emptyset$ for any $\delta>0$ and any $\varepsilon<1$.\qed
\end{proof}

\subsection{Robust Reductions}

In our discussion of reductions, we were interested in optimality preserving sampling-based reductions mapping an RL task $(\M,\p)$ to another task $(\bar{\M},\p')$. However, in the learning setting, if we use a PAC-MDP algorithm to compute a policy $\bar{\pi}$ for $(\bar{\M},\p')$, it might be the case that $\bar{\pi}\notin\Pi_{\opt}(\bar{\M}, \p')$. Therefore, we cannot conclude anything useful about the optimality of the corresponding policy $f(\bar{\pi})$ in $\M$ w.r.t. $\p$. Ideally, we would like to ensure that for any $\varepsilon>0$ there is a $\varepsilon'>0$ such that an $\varepsilon'$-optimal policy for ($
\bar{\M},\p')$ corresponds to an $\varepsilon$-optimal policy for ($\M,\p$).

\begin{definition}
A step-wise sampling-based reduction $\F$ is robust if for any RL task $(\M,\p)$ with $(\bar{\M},\p',f) = \F(\M,\p)$ and any $\varepsilon > 0$, there is an $\varepsilon'>0$ such that $f(\Pi_{\opt}^{\varepsilon'}(\bar{\M},\p'))\subseteq\Pi_{\opt}^{\varepsilon}({\M},\p)$.
\end{definition}

Observe that for any optimal policy $\bar{\pi}\in\Pi_{\opt}(\bar{\M},\p')$ for $\bar{\M}$ and $\p'$, we have $f(\bar{\pi})\in\bigcap_{\varepsilon>0}\Pi_{\opt}^{\varepsilon}(\M,\p) = \Pi_{\opt}(\M,\p)$; hence, a robust reduction is also optimality preserving. Although a robust reduction is preferred when translating LTL specifications to discounted-sum rewards, the following theorem shows that such a reduction is not possible. This is primarily due to the fact that LTL specifications are not robust whereas discounted-sum rewards are.

\begin{theorem}\label{thm:norobred}
Let $\prop=\{b\}$ and $\p=\L_{\safe}(\{b\})$. Then, there does not exist a robust step-wise sampling-based reduction $\F$ with the property that for any given $\M$, if $(\bar{\M},\p',f)=\F(\M,\p)$, then $\p'$ is a robust specification and $\Pi_{\opt}(\bar{\M},\p')\neq\emptyset$.
\end{theorem}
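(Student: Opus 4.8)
The plan is to combine the non-robustness of safety specifications (Theorem~\ref{thm:robspec}) with the robustness of $\p'$ and of the reduction $\F$ to derive a contradiction. Suppose such a robust reduction $\F$ exists. Take $\M$ to be the MDP from Figure~\ref{fig:rob} with $p_1=p_2=1$, so that $\p=\L_{\safe}(\{b\})$ has an optimal policy (always play $a_1$) with satisfaction probability $1$, while for every $\delta>0$ the $\delta$-perturbed MDP $\M_\delta$ (with $p_1=p_2=1-\delta$) forces every $\p$-optimal policy to play $a_2$ almost surely, giving satisfaction probability $0$ in $\M$. Let $(\bar\M,\p',f)=\F(\M,\p)$ and $(\bar\M_\delta,\p',f)=\F(\M_\delta,\p)$; note that $\bar S,\bar A,\bar s_0,\bar L,f,\beta,\alpha,q_1,q_2$ and $\p'$ are all independent of the transition probabilities of the input MDP, so only $\bar P$ changes with $\delta$.

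The first key step is to observe that $\bar\M_\delta$ is close to $\bar\M$ when $\delta$ is small. Indeed, from Equation~\ref{eq:newprob}, $\bar P(\bar s,\bar a,\bar s') - \bar P_\delta(\bar s,\bar a,\bar s') = \E_{a\sim\alpha(\bar s,\bar a)}[q_2(\bar s,\bar a,a,\bar s')(P(\beta(\bar s),a,\beta(\bar s')) - P_\delta(\beta(\bar s),a,\beta(\bar s')))]$, and since $|P - P_\delta|\le\delta$ everywhere and $q_2\in[0,1]$, we get $|\bar P(\bar s,\bar a,\bar s') - \bar P_\delta(\bar s,\bar a,\bar s')|\le\delta$; that is, $\bar\M_\delta$ is $\delta$-close to $\bar\M$. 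The second step uses robustness of $\p'$: fix any $\varepsilon_0>0$ (say $\varepsilon_0$ small relative to the value gap in $\bar\M$); there is $\delta_{\bar\M,\varepsilon_0}>0$ such that for $\delta\le\delta_{\bar\M,\varepsilon_0}$, any $\bar\pi\in\Pi_{\opt}(\bar\M_\delta,\p')$ satisfies $\bar\pi\in\Pi_{\opt}^{\varepsilon_0}(\bar\M,\p')$. The third step uses robustness of $\F$ applied to the task $(\M,\p)$: for the target accuracy $\varepsilon=1/2$ (any $\varepsilon<1$ works), there is $\varepsilon'>0$ with $f(\Pi_{\opt}^{\varepsilon'}(\bar\M,\p'))\subseteq\Pi_{\opt}^{1/2}(\M,\p)$. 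Now pick $\varepsilon_0\le\varepsilon'$; then for all sufficiently small $\delta$, any $\bar\pi\in\Pi_{\opt}(\bar\M_\delta,\p')$ lies in $\Pi_{\opt}^{\varepsilon'}(\bar\M,\p')$, hence $f(\bar\pi)\in\Pi_{\opt}^{1/2}(\M,\p)$, i.e. $f(\bar\pi)$ achieves safety probability at least $1/2$ in $\M$.

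On the other hand, $\F$ is optimality preserving for the task $(\M_\delta,\p)$ (robustness implies optimality preservation, as noted after the definition), and $\Pi_{\opt}(\bar\M_\delta,\p')\neq\emptyset$ by hypothesis; pick $\bar\pi^\star\in\Pi_{\opt}(\bar\M_\delta,\p')$. Then $f(\bar\pi^\star)\in\Pi_{\opt}(\M_\delta,\p)$, so $f(\bar\pi^\star)$ is a $\p$-optimal policy in $\M_\delta$, and by the argument in Theorem~\ref{thm:robspec} such a policy plays $a_2$ almost surely and therefore has safety probability $0$ in $\M$. For $\delta$ small enough this same $\bar\pi^\star$ is, by the previous paragraph, mapped by $f$ to a policy with safety probability at least $1/2$ in $\M$ --- a contradiction. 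Hence no such robust reduction exists.

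The main obstacle I anticipate is the bookkeeping around quantifier order: the robustness constants $\delta_{\bar\M,\varepsilon_0}$ and $\varepsilon'$ depend on $\bar\M$ and on $\F$, which are fixed once we fix $\M$ and $\p$, so we must first fix $\varepsilon=1/2$, obtain $\varepsilon'$, then choose $\varepsilon_0=\min(\varepsilon',\text{something})$, then finally choose $\delta$ small enough to satisfy both $\delta\le\delta_{\bar\M,\varepsilon_0}$ and whatever is needed for $\M_\delta$ to force $a_2$; one must also double-check that $f$ maps the same abstract policy $\bar\pi^\star$ consistently (which it does, since $f$ is independent of the transition probabilities), so that the "safety probability $\ge 1/2$" and "safety probability $=0$" conclusions genuinely refer to one and the same policy $f(\bar\pi^\star)$ evaluated in $\M$. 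A minor subtlety is ensuring $\varepsilon_0$ is taken strictly below the optimality gap of $\p'$ in $\bar\M$ so that $\Pi_{\opt}^{\varepsilon_0}(\bar\M,\p')$ is meaningfully constrained; but in fact we only need the chain of inclusions above, so even this is not strictly necessary.
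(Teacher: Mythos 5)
Your proposal is correct and follows essentially the same route as the paper's proof: the same counterexample MDP from Figure~\ref{fig:rob}, the observation that $\bar{\M}_\delta$ is $\delta$-close to $\bar{\M}$ via Equation~\ref{eq:newprob}, robustness of $\p'$ to pass from $\Pi_{\opt}(\bar{\M}_\delta,\p')$ to $\Pi_{\opt}^{\varepsilon'}(\bar{\M},\p')$, robustness of $\F$ to land in $\Pi_{\opt}^{\varepsilon}(\M,\p)$, and optimality preservation to land in $\Pi_{\opt}(\M_\delta,\p)$, contradicting the disjointness from Theorem~\ref{thm:robspec}. Your extra care with quantifier order and the explicit bound $|\bar{P}-\bar{P}_\delta|\leq\delta$ only makes explicit what the paper leaves implicit.
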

\begin{proof}
Consider the MDP $\M = (S, A, s_0, P, L)$ in Figure~\ref{fig:rob} with $p_1=p_2=1$, and consider any $\varepsilon<1$. From Theorem~\ref{thm:robspec}, we know that for any $\delta > 0$ there is an MDP $\M_{\delta} = (S, A, s_0, P_{\delta}, L)$ that is $\delta$-close to $\M$ such that $\Pi_{\opt}({\M}_{\delta},\p)\cap\Pi_{\opt}^{\varepsilon}(\M,\p)=\emptyset$. For the sake of contradiction, suppose that such a reduction exists. Then, since $\M$ and $\M_{\delta}$ represent the same input $(S, A, s_0,L, \p)$, the reduction outputs the same tuple $(\bar{S}, \bar{A}, \bar{s}_0, \bar{L}, f, \beta, \alpha, q_1, q_2, \p')$ in both cases. Furthermore, from Equation~\ref{eq:newprob} it follows, that the new transition probability functions $\bar{P}$ and $\bar{P}_{\delta}$ corresponding to $P$ and $P_{\delta}$ differ by at most a $\delta$ amount---i.e., $|\bar{P}(\bar{s},\bar{a},\bar{s}') - \bar{P}_{\delta}(\bar{s},\bar{a},\bar{s}')|\leq\delta$ for all $\bar{s}, \bar{s}'\in\bar{S}$ and $\bar{a}\in\bar{A}$. Let $\bar{\M}$ and $\bar{\M}_{\delta}$ be the MDPs corresponding to $\bar{P}$ and $\bar{P}_{\delta}$.

Let $\varepsilon'>0$ be such that $f(\Pi_{\opt}^{\varepsilon'}(\bar{\M},\p'))\subseteq\Pi_{\opt}^{\varepsilon}(\M,\p)$. Since the specification $\p'$ is robust, there is a $\delta=\delta_{\bar{\M},\varepsilon'}>0$ such that $\Pi_{\opt}(\bar{\M}_{\delta},\p')\subseteq\Pi_{\opt}^{\varepsilon'}(\bar{\M},\p')$. Let $\bar{\pi}\in\Pi_{\opt}(\bar{\M}_{\delta},\p')$ be an optimal policy for $\bar{\M}_{\delta}$ w.r.t. $\p'$. Now, since the reduction is optimality preserving, we have
$f(\bar{\pi})\in \Pi_{\opt}({\M}_{\delta},\p)$. But then, we also have $f(\bar{\pi}) \in \Pi_{\opt}^{\varepsilon}(\M,\p)$, which contradicts our assumption on $\M_{\delta}$.\qed
\end{proof}

We observe that the above result holds when the reduction is only allowed to take at most one step in $\M$ during a step in $\bar{\M}$ (and can be generalized to a bounded number of steps). This leads to the following open problem.

\begin{open}
Does there exist a robust sampling-based reduction $\F$ such that for any RL task $(\M,\p)$, where $\p$ is an LTL specification, letting $(\bar{\M},\p',f)=\F(\M,\p)$, we have that $\p'$
is a discounted reward specification (allowing $\bar{\M}$ to take unbounded number of steps in $\M$ per transition)?
\end{open}

Note that even if such a reduction is possible, simulating $\bar{\M}$ would computationally hard since there might be no bound on the time it takes for a step in $\bar{\M}$ to occur.


\label{sec:robustreduction}

\section{Reinforcement Learning from LTL Specifications}\label{sec:no_pac}
We formalized a notion of sampling-based reductions for MDPs with unknown transition probabilities. Although reducing LTL specifications to discounted rewards is a natural approach towards obtaining learning algorithms for LTL specifications, we showed that step-wise sampling-based reductions are insufficient to obtain learning algorithms with guarantees. This leads us to the natural question of whether it is possible to design learning algorithms for LTL specifications with guarantees. Unfortunately, it turns out that it is not possible to obtain PAC-MDP algorithms for safety specifications. 

\begin{theorem}\label{thm:nopac}
There does not exist a PAC-MDP algorithm for the class of safety specifications.
\end{theorem}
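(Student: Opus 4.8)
The plan is to argue by contradiction, exploiting the non-robustness of safety established in Theorem~\ref{thm:robspec}: a PAC-MDP algorithm cannot detect a tiny perturbation of the MDP within a bounded number of steps, yet such a perturbation can make the set of $\varepsilon$-optimal policies jump discontinuously, so the algorithm would be forced to output a single policy that is simultaneously near-optimal for two MDPs whose near-optimal policy sets are disjoint.

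\emph{Step 1 --- the gadget.} I would first fix a two-MDP family over a common state space $S=\{s_0,s_+,s_-\}$ with $s_+,s_-$ absorbing, actions $A=\{a_1,a_2\}$, and a safety specification $\p$ that forbids ever visiting $s_-$. In the base MDP $\M$, action $a_1$ at $s_0$ self-loops with probability $1$, while $a_2$ at $s_0$ moves to $s_+$ or to $s_-$ with probability $\tfrac12$ each; in the perturbed MDP $\M_\delta$ the only change is that $a_1$ at $s_0$ ``leaks'' to $s_-$ with probability $\delta$. A telescoping computation over the successive visits to $s_0$ --- valid even for history-dependent randomized policies, which is the part I expect to need some care --- gives $\J^*(\M,\p)=1$ (attained only by playing $a_1$ forever) and $\J^*(\M_\delta,\p)=\tfrac12$ (attained by playing $a_2$ immediately), and shows that any $\tfrac18$-optimal policy for $\M$ plays $a_1$ at $s_0$ with total probability at least $\tfrac34$ and therefore has value at most $\tfrac18<\tfrac38$ in $\M_\delta$; hence $\Pi_{\opt}^{1/8}(\M,\p)\cap\Pi_{\opt}^{1/8}(\M_\delta,\p)=\emptyset$. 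Finally $\M_\delta$ is $\delta$-close to $\M$, and since they differ only in the outcome of the single transition $(s_0,a_1)$, a coupling shows that over $T$ iterations of any learning algorithm the interaction traces with $\M$ and with $\M_\delta$ agree except on an event of probability at most $T\delta$ (there are at most $T$ steps in $T$ iterations, so allowing resets does not help); consequently the laws of the first $T$ output policies under the two simulators are within $T\delta$ in total variation.

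\emph{Step 2 --- the contradiction.} Suppose $\A$ is PAC-MDP for safety with sample-complexity function $h$; put $\varepsilon=p=\tfrac18$ and $N=h(|S|,|A|,|\p|,8,8)$. The crucial observation is that $N$ is \emph{the same} for $\M$ and $\M_\delta$, since these share $|S|$, $|A|$ and $\p$. Run $\A$ for $T=8(N+1)$ iterations, and let $G$ (resp.\ $G_\delta$) be the event, depending only on $(\pi_1,\dots,\pi_T)$, that at most $N$ of these output policies lie outside $\Pi_{\opt}^{\varepsilon}(\M,\p)$ (resp.\ outside $\Pi_{\opt}^{\varepsilon}(\M_\delta,\p)$). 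The PAC-MDP guarantee applied to $\M$ and to $\M_\delta$ yields $\Pr_{\M}[G]\ge 1-p$ and $\Pr_{\M_\delta}[G_\delta]\ge 1-p$. Choosing $\delta<\tfrac{1}{8T}$, the total-variation bound gives $\Pr_{\M}[G_\delta]\ge\Pr_{\M_\delta}[G_\delta]-T\delta>1-p-\tfrac18$, so by a union bound $\Pr_{\M}[G\cap G_\delta]\ge 1-2p-\tfrac18>0$. But on $G\cap G_\delta$ at most $2N<T$ of the first $T$ output policies fail to be $\varepsilon$-optimal for at least one of $\M,\M_\delta$, so some $\pi_n$ lies in $\Pi_{\opt}^{\varepsilon}(\M,\p)\cap\Pi_{\opt}^{\varepsilon}(\M_\delta,\p)$, which we showed is empty --- a contradiction.

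\emph{Expected main obstacle.} The coupling and total-variation estimate, and the invariance of $N$ under the perturbation, are routine; the real difficulty is designing the gadget so that $\M_\delta$ is simultaneously $\delta$-close to $\M$, statistically indistinguishable from $\M$ within a bounded horizon, has optimal value bounded away from $0$ (so that the PAC-MDP guarantee on $\M_\delta$ actually constrains the algorithm), and has $\varepsilon$-optimal policy set disjoint from that of $\M$. The ``$\tfrac12/\tfrac12$ trap'' reached via $a_2$ is what reconciles the last two requirements; by contrast, the simpler MDP used for Theorem~\ref{thm:robspec} has $\J^*(\M_\delta)=\delta\to 0$ and so is unusable here. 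Verifying the disjointness for arbitrary history-dependent randomized policies (rather than only the two positional ones) via the telescoping identity is the step most likely to require care.
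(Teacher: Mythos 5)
Your proof is correct, and the overall strategy is the one the paper uses --- statistical indistinguishability of $\delta$-close MDPs over a bounded horizon combined with disjointness of $\varepsilon$-optimal policy sets --- but the decomposition is genuinely different in two respects. First, the gadget: the paper's counterexample (Figure~\ref{fig:nopacex}) uses two symmetric perturbations $\M_\delta^1,\M_\delta^2$ of a common base $\M$, each of which keeps optimal value $1$ but achieves it via a different safe haven ($s_0$ versus $s_2$), so that Lemma~\ref{lem:nopac} gives $\Pi_{\opt}^{\varepsilon}(\M_\delta^1,\p)\cap\Pi_{\opt}^{\varepsilon}(\M_\delta^2,\p)=\emptyset$; you instead perturb only once and make the alternative action a $\tfrac12/\tfrac12$ gamble so that $\J^*(\M_\delta,\p)=\tfrac12$ stays bounded away from $0$ and $\Pi_{\opt}^{\varepsilon}(\M,\p)\cap\Pi_{\opt}^{\varepsilon}(\M_\delta,\p)=\emptyset$. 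Your diagnosis of why the Figure~\ref{fig:rob} MDP cannot be reused here ($\J^*(\M_\delta,\p)=\delta$ makes the PAC guarantee on $\M_\delta$ vacuous) is exactly the issue both gadgets are engineered around. Second, the shape of the contradiction: the paper runs $\A$ on the base $\M$ and shows the disjoint events $E_1,E_2$ would each need probability at least $0.8$, transferring probability mass from $\M_\delta^j$ to $\M$ via the one-sided observation that any execution consistent with the deterministic transitions of $\M$ is at least as likely under $\M$ as under $\M_\delta^j$; you instead compare $\M$ with $\M_\delta$ directly via a two-sided coupling/total-variation bound of $T\delta$ on the law of the output policies and conclude some $\pi_n$ would have to be $\varepsilon$-optimal for both. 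Your telescoping verification of the disjointness for history-dependent randomized policies (reducing everything to $q=\prod_i x_i$) mirrors the computation in Lemma~\ref{lem:nopac}, and your counting step ($2N<T$) plays the role of the paper's choice $K=2N+1$. Both routes are sound; the paper's two-perturbation symmetry avoids having to reason about $\Pi_{\opt}^{\varepsilon}(\M,\p)$ itself, while your single-perturbation version is arguably the more economical statement of the same obstruction.
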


Theorem~\ref{thm:norobred} shows that it is not possible to obtain a PAC-MDP algorithm for safety specifications by simply applying a step-wise sampling-based reduction followed by a PAC-MDP algorithm for discounted reward specifications. Also, Theorem~\ref{thm:norobred} does not follow from Theorem~\ref{thm:nopac} because, the definition of a robust reduction allows the maximum value of $\varepsilon'$ that satisfies $f(\Pi_{\opt}^{\varepsilon'}(\bar{\M},\p'))\subseteq\Pi_{\opt}^{\varepsilon}(\bar{\M},\p)$ to depend on the transition probability function $P$ of $\M$. However the sample complexity function $h$ of a PAC-MDP algorithm (Definition~\ref{def:pac-mdp}) should be independent of $P$.

{Intuitively, Theorem~\ref{thm:nopac} follows from that fact that, when learning from simulation, it is highly likely that the learning algorithm will encounter identical transitions when the underlying MDP is modified slightly. This makes it impossible to infer an $\varepsilon$-optimal policy using a number of samples that is independent of the transition probabilities since safety specifications are not robust.}

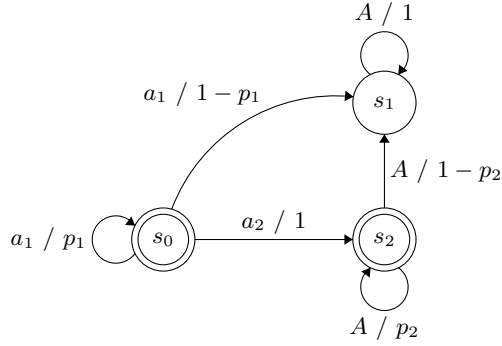
\begin{figure}
    \begin{center}
\begin{tikzpicture}[scale=0.14]
\tikzstyle{every node}+=[inner sep=0pt]

\draw [black] (20,-29.1) circle (3);
\draw (20,-29.1)  node  {$s_0$};
\draw [black] (20,-29.1) circle (2.4);
\draw [black] (41,-16.1) circle (3);
\draw (41,-16.1) node {$s_1$};
\draw [black] (41,-29.1) circle (3);
\draw (41,-29.1) node {$s_2$};
\draw [black] (41,-29.1) circle (2.4);
\draw [black] (20.8,-26.213) arc (159.34201:84.17695:16.64);
\fill [black] (38.06,-15.53) -- (37.31,-14.95) -- (37.21,-15.95);
\draw (23.72,-16.43) node [above] {$a_1\mbox{ }/\mbox{ }1-p_1$};
\draw [black] (17.32,-30.423) arc (324:36:2.25);
\draw (12.75,-29.1) node [left] {$a_1\mbox{ }/\mbox{ }p_1$};
\fill [black] (17.32,-27.78) -- (16.97,-26.9) -- (16.38,-27.71);
\draw [black] (39.677,-13.42) arc (234:-54:2.25);
\draw (41,-8.85) node [above] {$A\mbox{ }/\mbox{ }1$};
\fill [black] (42.32,-13.42) -- (43.2,-13.07) -- (42.39,-12.48);
\draw [black] (42.323,-31.78) arc (54:-234:2.25);
\draw (41,-36.35) node [below] {$A\mbox{ }/\mbox{ }p_2$};
\fill [black] (39.68,-31.78) -- (38.8,-32.13) -- (39.61,-32.72);
\draw [black] (23,-29.1) -- (38,-29.1);
\fill [black] (38,-29.1) -- (37.2,-28.6) -- (37.2,-29.6);
\draw (30.5,-28.6) node [above] {$a_2\mbox{ }/\mbox{ }1$};
\draw [black] (41,-26.1) -- (41,-19.1);
\fill [black] (41,-19.1) -- (40.5,-19.9) -- (41.5,-19.9);
\draw (41.5,-22.6) node [right] {$A\mbox{ }/\mbox{ }1-p_2$};
\end{tikzpicture}
\end{center}
    \caption{A class of MDPs for showing no PAC-MDP algorithm exists for safety specifications.}
    \label{fig:nopacex}
\end{figure}

\begin{proof}
Suppose there is a PAC-MDP algorithm $\A$ for the class of safety specifications. Consider $\prop=\{b\}$ and the family of MDPs shown in Figure~\ref{fig:nopacex} where double circles denote states at which $b$ holds. Let $\p=\L_{\safe}(\{b\})$ and $0 < \varepsilon<\frac{1}{2}$. For any $\delta>0$, we use $\M_{\delta}^1$ to denote the MDP with $p_1=1$ and $p_2=1-\delta$, and $\M_{\delta}^2$ to denote the MDP with $p_1=1-\delta$ and $p_2=1$. Finally, let $\M$ denote the MDP with $p_1=p_2=1$. Now we have the following lemma.

\begin{lemma}\label{lem:nopac}
For any $\delta\in]0,1[$, we have $\Pi_{\opt}^{\varepsilon}(\M_{\delta}^1,\p)\cap\Pi_{\opt}^{\varepsilon}(\M_{\delta}^2,\p) = \emptyset$.
\end{lemma}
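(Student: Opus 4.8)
The plan is to show that the two MDPs $\M^1_\delta$ and $\M^2_\delta$ have disjoint sets of $\varepsilon$-optimal policies by computing, for every policy, its safety value in both MDPs and observing that no single policy can do well in both. Recall the structure of Figure~\ref{fig:nopacex}: from $s_0$, action $a_1$ self-loops with probability $p_1$ and moves to $s_1$ (where $b$ fails) with probability $1-p_1$, while action $a_2$ moves deterministically to $s_2$ (where $b$ holds) and from $s_2$ every action self-loops with probability $p_2$ and moves to $s_1$ with probability $1-p_2$. The proposition $b$ holds exactly at $s_0$ and $s_2$, so $\L_{\safe}(\{b\})$ is satisfied by a run iff it never visits $s_1$.

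First I would analyze $\M = \M$ with $p_1 = p_2 = 1$: here staying at $s_0$ forever (always $a_1$) or moving to $s_2$ and staying there forever both achieve safety probability $1$, so $\J^*(\M,\p) = 1$. Next, in $\M^1_\delta$ ($p_1 = 1$, $p_2 = 1-\delta$): from $s_0$ the only way to stay safe forever is to keep playing $a_1$ (since $p_1 = 1$ makes this safe with probability $1$), whereas any policy that ever plays $a_2$ in $s_0$ reaches $s_2$ and then leaks to $s_1$ with probability $1$ eventually (since $p_2 < 1$), giving safety value $0$ from that point. More precisely, the value of a policy in $\M^1_\delta$ equals the probability that it never plays $a_2$ while at $s_0$; in particular an $\varepsilon$-optimal policy (with $\varepsilon < 1/2$) must play $a_1$ at $s_0$ with probability close to $1$. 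Symmetrically, in $\M^2_\delta$ ($p_1 = 1-\delta$, $p_2 = 1$): staying at $s_0$ now leaks to $s_1$ with probability $1$, while moving to $s_2$ (playing $a_2$) and then staying is perfectly safe; so the value of a policy equals the probability it plays $a_2$ at $s_0$ before leaking, and an $\varepsilon$-optimal policy must play $a_2$ at $s_0$ with probability close to $1$.

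The key quantitative step is to make these "close to $1$" statements precise enough to force disjointness. I would compute, for an arbitrary policy $\pi$, explicit formulas: in $\M^1_\delta$, letting $P_\pi(\text{never } a_2 \text{ at } s_0)$ be the relevant probability, $J^{\M^1_\delta}_\p(\pi)$ equals exactly this probability; in $\M^2_\delta$, $J^{\M^2_\delta}_\p(\pi)$ equals the probability that the run eventually plays $a_2$ at $s_0$ (and then survives, which has probability $1$ since $p_2 = 1$). Since a run either eventually plays $a_2$ at $s_0$ or never does — and in $\M^2_\delta$ the self-loop probability $p_1 = 1-\delta < 1$ guarantees that with probability $1$ the run leaves $s_0$ in finitely many steps, so "never plays $a_2$ at $s_0$" forces a visit to $s_1$ — we get $J^{\M^1_\delta}_\p(\pi) + J^{\M^2_\delta}_\p(\pi) \le 1$ (the two events being essentially complementary up to the subtlety of runs that stay at $s_0$ forever, which has probability $0$ in $\M^2_\delta$). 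Hence $\pi$ cannot have value exceeding $1/2 > 1 - \varepsilon$ in both, so $\pi$ cannot lie in $\Pi^\varepsilon_{\opt}(\M^1_\delta,\p) \cap \Pi^\varepsilon_{\opt}(\M^2_\delta,\p)$, since $\J^*(\M^i_\delta,\p) = 1$ for both $i$.

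The main obstacle is handling the measure-zero corner cases carefully — in particular verifying that $\J^*(\M^i_\delta,\p) = 1$ exactly (achieved only in the limit, or by the positional policy that commits to the safe action), and that a run staying at $s_0$ forever has probability $0$ under the relevant dynamics so that the "complementary events" bookkeeping is valid. One must also be careful that the value functions are defined via the $\liminf$/probability-measure machinery of the preliminaries and confirm that for these simple chains the safety probability has the clean closed form claimed. Once those routine verifications are in place, the inequality $J^{\M^1_\delta}_\p(\pi) + J^{\M^2_\delta}_\p(\pi) \le 1$ combined with $\varepsilon < 1/2$ immediately yields the emptiness claim.
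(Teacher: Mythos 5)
Your proof is correct and follows essentially the same route as the paper's: both rest on the observations that a policy's value in $\M^1_\delta$ equals the probability $\prod_i x_i$ of never playing $a_2$ at $s_0$, that in $\M^2_\delta$ the same ``never play $a_2$'' behaviour leads to $s_1$, and that the resulting unsafe probability in $\M^2_\delta$ is at least $\sum_{t\ge 1}\delta(1-\delta)^{t-1}\prod_{i<t}x_i \ge \prod_i x_i$, so the two values sum to at most $1$ while $\varepsilon$-optimality in both (with $\varepsilon<\tfrac12$) would force each above $\tfrac12$. The one step to make fully explicit is that your inequality $J^{\M^1_\delta}_{\p}(\pi)+J^{\M^2_\delta}_{\p}(\pi)\le 1$ compares probabilities under two \emph{different} measures, so it is not literal complementarity of events but exactly the geometric-sum bound above --- which is the computation the paper carries out with the disjoint events $E_t$.
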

\begin{proof}
Suppose $\pi\in\Pi_{\opt}^{\varepsilon}(\M^1_{\delta},\p)$ is an $\varepsilon$-optimal policy for $\M^1_{\delta}$ w.r.t. $\p$. Let $x_i = \pi((s_0a_1)^is_0)(a_1)$ denote the probability that $\pi$ chooses $a_1$ after $i$ self-loops in $s_0$. Then $J_{\p}^{\M_{\delta}^1}(\pi)=\lim_{t\to\infty}\prod_{i=0}^{t}x_i$ since choosing $a_2$ in $s_0$ leads to eventual violation of the safety specification. The policy $\pi_1^*$ that always chooses $a_1$ achieves a value of $J_{\p}^{\M^1_{\delta}}(\pi_1^*) = \J^*(\M_{\delta}^1,\p) = 1$. Since $\pi\in\Pi_{\opt}^{\varepsilon}(\M^1_{\delta},\p)$ we have $\lim_{t\to\infty}\prod_{i=0}^{t}x_i\geq 1-\varepsilon$. Therefore $\prod_{i=0}^{t}x_i\geq 1-\varepsilon$ for all $t\in\N$ since $z_t = \prod_{i=0}^{t}x_i$ is a non-increasing sequence.

Now let $E_t = \code{Cyl}((s_0a_1)^ts_1)$ denote the set of all runs that reach $s_1$ after exactly $t$ steps while staying in $s_0$ until then. We have $$\D_{\pi}^{\M^2_{\delta}}(E_t) = (1-p_1)p_1^{t-1}\prod_{i=0}^{t-1} x_i \geq \delta(1-\delta)^{t-1}(1-\varepsilon).$$ Since $\{E_t\}_{t=1}^{\infty}$ are pairwise disjoint sets, letting $E = \bigcup_{t=1}^{\infty}E_t$, we have $$\D_{\pi}^{\M^2_{\delta}}(E) = \sum_{t=1}^{\infty}\D_{\pi}^{\M^2_{\delta}}(E_t)\geq \sum_{t=1}^{\infty}\delta(1-\delta)^{t-1}(1-\varepsilon) = 1-\varepsilon.$$

But we have that $E\subseteq B = \{\zeta\in\traj(S,A)\mid L(\zeta)\notin\L_{\safe}(\{b\})\}$ and hence $J_{\p}^{\M^2_{\delta}}(\pi) = 1-\D_{\pi}^{\M^2_{\delta}}(B)\leq1-\D_{\pi}^{\M^2_{\delta}}(E)\leq\varepsilon$. Any policy ${\pi_2^*}$ that picks $a_2$ in the first step achieves $J_{\p}^{\M^2_{\delta}}(\pi^*_2) = \J^*(\M_{\delta}^2,\p) = 1$. Since $\varepsilon<\frac{1}{2}$, we have $J_{\p}^{\M^2_{\delta}}(\pi) \leq \varepsilon < \frac{1}{2}< 1-\varepsilon=\J^*(\M_{\delta}^2,\p)-\varepsilon$ which implies $\pi\notin \Pi^{\varepsilon}_{\opt}(\M_{\delta}^2,\p)$.
Therefore $\Pi_{\opt}^{\varepsilon}(\M^1_{\delta}, \p)\cap\Pi^{\varepsilon}_{\opt}(\M_{\delta}^2,\p)=\emptyset$ for all $\delta\in]0,1[$.\qed
\end{proof}
Now let $h$ be the sample complexity function of $\A$ as in Definition~\ref{def:pac-mdp}. We let $p = 0.1$ and $N = h(|S|,|A|,|\p|,\frac{1}{p},\frac{1}{\varepsilon})$. We let $K=2N+1$ and choose $\delta\in]0,1[$ such that $(1-\delta)^K\geq 0.9$. Let $\{\pi_n\}_{n=1}^{\infty}$ denote the sequence of output policies of $\A$ when run on $\M$ with the precision $\varepsilon<\frac{1}{2}$ and $p=0.1$. For $j\in\{1,2\}$, let $E_j$ denote the event that at most $N$ out of the first $K$ policies $\{\pi_n\}_{n=1}^K$ are \emph{not} $\varepsilon$-optimal for $\M_{\delta}^j$ (when $\A$ is run on $\M$). Then we have $\Pr_{\A}^{\M}(E_1) + \Pr_{\A}^{\M}(E_2) \leq 1$ because $E_1$ and $E_2$ are disjoint events (due to Lemma~\ref{lem:nopac}).

For $j\in\{1,2\}$, we let $\{\pi^{j}_n\}_{n=1}^{\infty}$ be the sequence of output policies of $\A$ when run on $\M_{\delta}^j$ with the same precision $\varepsilon$ and $p=0.1$. 
Let $F_j$ denote the event that at most $N$ out of the first $K$ policies $\{\pi^j_n\}_{n=1}^K$ are \emph{not} $\varepsilon$-optimal for $\M_{\delta}^j$ (when $\A$ is run on $\M_{\delta}^j$). Then PAC-MDP guarantee of $\A$ gives us that $\Pr_{\A}^{\M_{\delta}^j}(F_j)\geq 0.9$ for $j\in\{1,2\}$. Now let $G_j$ denote the event that the the first $K$ samples from $\M_{\delta}^j$ correspond to the deterministic transitions in $\M$---i.e., taking $a_1$ in $s_0$ leads to $s_0$ and taking any action in $s_2$ leads to $s_2$. We have that $\Pr_{\A}^{\M_{\delta}^j}(G_j) \geq (1-\delta)^K\geq 0.9$ for $j\in\{1,2\}$.

Applying union bound, we get that $\Pr_{\A}^{\M_{\delta}^j}(F_j\land G_j)\geq 0.8$ for $j\in\{1,2\}$. The probability of any execution (sequence of output policies, actions taken, resets performed and transitions observed) of $\A$ on $\M_{\delta}^j$ that satisfies the conditions of $F_j$ and $G_j$ is less than or equal to the probability of obtaining the same execution when $\A$ is run on $\M$ and furthermore such an execution also satisfies the conditions of $E_j$. Therefore, we have $\Pr_{\A}^{\M}(E_j)\geq\Pr_{\A}^{\M_{\delta}^j}(F_j\land G_j)\geq 0.8$ for $j\in\{1,2\}$. But this contradicts the fact that $\Pr_{\A}^{\M}(E_1) + \Pr_{\A}^{\M}(E_2) \leq 1$.\qed
\end{proof}



{We can also conclude that PAC-MDP algorithms do not exist for limit-average rewards since safety specifications can be encoded using limit-average rewards. Our proof of Theorem~\ref{thm:nopac} can be modified to show the result for reachability as well.

A concurrent work \cite{yang2021reinforcement} characterizes the class of LTL specifications for which PAC-MDP algorithms exist. An LTL formula $\varphi$ is \emph{finitary} if there exists a horizon $H$ such that infinite
length words sharing the same prefix of length $H$ are either all accepted or all rejected by $\varphi$. Then, their result can be summarized as follows.}

\begin{theorem}[\cite{yang2021reinforcement}]
There exists a PAC-MDP algorithm for an LTL specification $\p=\L_{\textsc{ltl}}(\varphi)$ if and only if $\varphi$ is finitary.
\end{theorem}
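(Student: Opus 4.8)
The plan is to prove the two implications separately; the glue between them is the standard fact that $\varphi$ is finitary iff the $\omega$-language $\Lambda:=\L_{\ltl}(\varphi)$ is clopen in the Cantor topology on $(2^{\prop})^{\omega}$. One direction is immediate: if acceptance depends only on the length-$H$ prefix, then $\Lambda$ is a finite union of cylinders, hence clopen. The converse uses compactness of $(2^{\prop})^{\omega}$: a clopen set is a finite union of cylinders, and the maximum of their depths is a valid horizon $H$.

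\textbf{Finitary $\Rightarrow$ PAC-MDP.} Let $\varphi$ be finitary with horizon $H$ (computable from $\varphi$), so $J^{\M}_{\p}(\pi)$ equals the probability that the length-$H$ prefix of $L(\zeta)$ lies in the computable ``good prefix'' set $G\subseteq(2^{\prop})^{\leq H}$. I would encode this as a finite \emph{discounted} abstract reward machine $\Rc$ whose state records the label history truncated at length $H$ and which emits reward $1$ on the single transition at which the prefix becomes determined iff it lies in $G$ (reward $0$ on all other transitions). Applying the reward-machine product of Example~\ref{ex:rm} to (the instantiation of) $\Rc$ with an arbitrary fixed discount $\gamma\in]0,1[$ yields a step-wise sampling-based reduction $\F$ with $\F(\M,\p)=(\bar{\M},\p',f)$, $\p'=(\Rc,\gamma)$ a transition-based discounted-reward specification, and the \emph{exact} value identity $J^{\bar{\M}}_{\p'}(\bar{\pi})=\gamma^{H}J^{\M}_{\p}(f(\bar{\pi}))$ for every $\bar{\pi}$ (hence $\J^*(\bar{\M},\p')=\gamma^{H}\J^*(\M,\p)$). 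So $\F$ is optimality preserving and, with $\varepsilon'=\gamma^{H}\varepsilon$, satisfies $f(\Pi^{\varepsilon'}_{\opt}(\bar{\M},\p'))\subseteq\Pi^{\varepsilon}_{\opt}(\M,\p)$. Now run a known PAC-MDP algorithm for discounted rewards~\citep{kearns2002near,strehl2006pac} on the simulator $\bar{\S}$ of Algorithm~\ref{alg:step_mbar} with parameters $(p,\varepsilon')$, outputting $f$ of its current policy at each iteration; every bad iteration of the composed algorithm is a bad iteration of the discounted one, so the number of bad iterations is at most $h_{\mathrm{disc}}(|\bar{S}|,|\bar{A}|,|\p'|,\tfrac1p,\tfrac1{\varepsilon'})$, a finite function of $|S|,|A|,|\p|,\tfrac1p,\tfrac1\varepsilon$ alone. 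The resulting $h$ is not polynomial (both $|\bar{S}|$ and $1/\varepsilon'=1/(\gamma^{H}\varepsilon)$ grow with $H$), but Definition~\ref{def:pac-mdp} only requires a finite $h$. (Alternatively, one may directly invoke the known PAC-MDP guarantees for finite-horizon MDPs, running length-$H$ episodes separated by \code{reset} calls.)

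\textbf{Non-finitary $\Rightarrow$ no PAC-MDP.} This extends Theorem~\ref{thm:nopac}. Since $\Lambda$ is not clopen, $\Lambda$ or $\Lambda^{c}$ fails to be open; assume WLOG that $\Lambda$ is not open (the other case is symmetric, exchanging ``accepted'' and ``rejected'' below, the only difference being that the reference MDP $\M$ then has optimal value $0$, which is harmless). The interior $\mathrm{int}(\Lambda)$ is $\omega$-regular, hence so is $\Lambda\setminus\mathrm{int}(\Lambda)$, which is nonempty and therefore contains an ultimately periodic word, normalized to $uP^{\omega}\in\Lambda$. Left-residuals $\Lambda_{z}:=\{y\mid zy\in\Lambda\}$ take finitely many values over finite $z$, so two among $\Lambda_{u},\Lambda_{uP},\Lambda_{uP^{2}},\dots$ agree; replacing $u$ by $uP^{k_1}$ and $P$ by $P^{k_2-k_1}$ we may assume $\Lambda_{uP^{j}}=\Lambda_{u}$ for all $j\geq0$, still with $uP^{\omega}\in\Lambda$. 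Since $uP^{\omega}\notin\mathrm{int}(\Lambda)$ the prefix $u$ extends out of $\Lambda$, so $\Lambda_{u}\neq(2^{\prop})^{\omega}$; picking an ultimately periodic $Q^{\omega}$ in the nonempty $\omega$-regular set $\Lambda_{u}^{c}$ gives $uP^{j}Q^{\omega}\notin\Lambda$ for \emph{all} $j\geq0$. This is the gadget. Build three labeled MDPs on a common state space in the style of Figure~\ref{fig:nopacex}: from $s_{0}$, action $a_{1}$ (resp.\ $a_{2}$) leads into a deterministic chain emitting $u$ followed by a single-action cycle emitting $P$ one of whose edges leaks, with probability $1-p_{1}$ (resp.\ $1-p_{2}$), into a deterministic tail emitting $Q^{\omega}$; all states other than $s_{0}$ have a single action. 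Take $\M$ with $p_{1}=p_{2}=1$, $\M^{1}_{\delta}$ with $p_{1}=1,p_{2}=1-\delta$, and $\M^{2}_{\delta}$ with $p_{1}=1-\delta,p_{2}=1$. By the gadget a policy's value in $\M^{1}_{\delta}$ (resp.\ $\M^{2}_{\delta}$) equals the probability it selects $a_{1}$ (resp.\ $a_{2}$) at $s_{0}$, so for $\varepsilon<\tfrac12$ an $\varepsilon$-optimal policy for $\M^{1}_{\delta}$ picks $a_{1}$ at $s_{0}$ with probability $>\tfrac12$ while one for $\M^{2}_{\delta}$ picks $a_{2}$ with probability $>\tfrac12$, whence $\Pi^{\varepsilon}_{\opt}(\M^{1}_{\delta},\p)\cap\Pi^{\varepsilon}_{\opt}(\M^{2}_{\delta},\p)=\emptyset$; meanwhile the leak edges are the only stochastic transitions, so with probability at least $(1-\delta)^{K}$ the first $K$ samples from $\M^{j}_{\delta}$ coincide with the all-probability-one transitions of $\M$. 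From here the coupling argument in the proof of Theorem~\ref{thm:nopac} applies verbatim: take $K=2N+1$ for $N$ the alleged sample complexity, $\delta$ with $(1-\delta)^{K}\geq0.9$, and compare runs of the algorithm on $\M$ with those on $\M^{1}_{\delta}$ and $\M^{2}_{\delta}$ to derive $\Pr_{\A}^{\M}(E_{1})+\Pr_{\A}^{\M}(E_{2})\geq1.6$, contradicting the disjointness of $E_{1},E_{2}$.

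\textbf{Main obstacle.} The product in the first direction and the coupling in the second are essentially as in the excerpt; the crux is the gadget extraction: showing that every non-finitary LTL language contains an ultimately periodic ``loop'' word all of whose truncations can be continued, by one fixed ultimately periodic word, into the complement. The key trick there is to slide the truncation point to a position whose left-residual is recurrent, so that leaking at \emph{any} iteration of the loop yields a rejected run simultaneously. The only other points needing care are a clean proof of the ``finitary $\iff$ clopen'' equivalence (to license the topological argument) and checking that in the $\Lambda^{c}$-not-open subcase the reference MDP's optimal value being $0$ does not affect the coupling, since $\M$ is never required to be itself learnable.
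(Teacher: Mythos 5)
You should know at the outset that the paper does not prove this theorem: it is imported verbatim from the concurrent work \cite{yang2021reinforcement} and stated only as a summary of their result, so there is no in-paper proof to compare against. Judged on its own merits, your reconstruction is essentially sound. The positive direction (finitary $\Rightarrow$ PAC-MDP) is routine once one accepts the ``finitary iff clopen iff finite union of cylinders'' equivalence, which you argue correctly via compactness; reducing to a discounted reward machine that pays off once the length-$H$ prefix is determined and invoking a known PAC-MDP algorithm for discounted rewards is a valid route, and you are right that Definition~\ref{def:pac-mdp} does not demand polynomial dependence, so the blow-up in $|\bar{S}|$ and $1/\varepsilon'$ is harmless. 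The negative direction is where the real content lies, and your strategy is exactly the natural generalization of the paper's own Theorem~\ref{thm:nopac}: the safety formula $\Box b$ there is the special case where $u$ is empty, $P=\{b\}$, and the leak tail is $\emptyset^{\omega}$. Your gadget extraction --- sliding the truncation point to a position whose left-residual recurs along the period, so that one fixed ultimately periodic escape word witnesses rejection after \emph{every} iteration of the loop --- is the right key lemma, and the residual calculus ($\Lambda_{zw}=(\Lambda_z)_w$, finitely many residuals of an $\omega$-regular language) supports it. The coupling argument then does transfer verbatim, including in the subcase where the reference MDP has optimal value $0$.

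Three small points need tightening, none fatal. First, the escape word in $\Lambda_u^{c}$ is in general $vQ^{\omega}$ for some finite $v$, not $Q^{\omega}$; fold $v$ into the leak tail. Second, the leak edge must sit at the \emph{end} of the $P$-cycle (or you must run the residual-stabilization argument at the actual leak position), so that the emitted prefix at the moment of leaking is exactly $uP^{j}$ and the identity $\Lambda_{uP^{j}}=\Lambda_u$ applies; as written, a leak in mid-period emits $uP^{j}w$ for a proper prefix $w$ of $P$, about which you have proved nothing. Third, the constant in $J^{\bar{\M}}_{\p'}=\gamma^{H}J^{\M}_{\p}$ is off by a fixed index shift given the paper's convention that the reward machine reads $L(s_{i+1})$, but since it is a fixed positive multiple the optimality-preservation and robustness claims are unaffected. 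One structural simplification you made relative to Figure~\ref{fig:nopacex} is worth noting: by having both actions immediately leave $s_0$ for single-action chains, the policy makes exactly one decision, so your disjointness lemma avoids the infinite-product bookkeeping of Lemma~\ref{lem:nopac}; that is a legitimate and cleaner variant.
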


Next, to the best of our knowledge, it is unknown if there is a learning algorithm that converges in the limit for the class of LTL specifications.

\begin{open}
Does there exist a learning algorithm that converges in the limit for the class of LTL specifications?
\end{open}

{Observe that algorithms that converge in the limit do not necessarily have a bound on the number of samples needed to learn an $\varepsilon$-optimal policy; instead, they only guarantee that the values of the policies $\{J_{\p}^{\M}(\pi_n)\}_{n=1}^{\infty}$ converge to the optimal value $\J^*(\M,\p)$ almost surely. Therefore, the rate of convergence can be arbitrarily small and can depend on the transition probability function $P$.}

\subsection{Exisiting PAC Results}\label{sec:pac_result}
It has been shown that one can obtain PAC algorithms for learning from logical specifications under some additional assumptions. For instance, some stochastic model checking (SMC) algorithms~\cite{daca2017faster, ashok2019pac} have PAC guarantees with the sample complexity function $h$ depending on the smallest positive probability $p_{\min}$ of $\M$. A recent paper \citep{fu2014probably} proposes a PAC algorithm for LTL specifications under the assumption that the structure of the MDP $\M$ (transitions with non-zero probability) is known.
\section{Concluding Remarks}

We have established a formal framework for sampling-based reductions of RL tasks. Given an RL task (an MDP and a specification), the goal is to generate another RL task such that the transformation preserves optimal solutions and is (optionally) robust. A key challenge is that the transformation must be defined without the knowledge of the transition probabilities. 

    
    
    

{Our framework offers a unified view of the literature on RL from logical specifications, in which an RL task with a logical specification is transformed to one with a reward-based specification. We define optimality preserving as well as robust sampling-based reductions for RL tasks. 
Specification translations are special forms of sampling-based reductions in which  the underlying MDP is not altered. We show that specification translations from LTL to reward machines with discounted-sum objectives do not preserve optimal solutions. This motivates the need for transformations in which the underlying MDP may be altered. 
By revisiting such transformations from existing literature within our framework, we expose the nuances in their theoretical guarantees about optimality preservation.
Specifically, known transformations from LTL specifications to rewards are not strictly optimality preserving sampling-based reductions since they depend on parameters which are not available in the RL setting such as some information about the transition probabilities of the MDP. 
We show that LTL specifications, which are non-robust,  cannot be robustly transformed to robust specifications, such as discounted-sum rewards. We wrap up by proving that there are LTL specifications that do not admit PAC-MDP learning algorithms. 

Finally, we are left with multiple open problems. Notably, it is unknown whether there exists a learning algorithm  for LTL that converges in the limit and does not depend on any unavailable information about the MDP. However, existing algorithms for learning from LTL specifications have been demonstrated to be effective in practice, even for continuous state MDPs.
This shows that there is a gap between the theory and practice suggesting that we need better measures for theoretical analysis of such algorithms; for instance, realistic MDPs may have additional structure that makes learning possible.}\\

\noindent\textbf{Acknowledgements.} We would like to thank Michael Littman, Sheila McIlraith, Ufuk Topcu, Ashutosh Trivedi and the anonymous reviewers for their feedback on an early version of this paper. This work was supported in part by CRA/NSF Computing Innovations Fellow
Award, DARPA
Assured Autonomy project under Contract No. FA8750-18-C-0090, NSF Awards CCF-1910769 and CCF-1917852, ARO Award W911NF-20-1-0080 and ONR
award N00014-20-1-2115.

%
%
\bibliographystyle{splncs04}
\bibliography{main}

\begin{thebibliography}{10}
\providecommand{\url}[1]{\texttt{#1}}
\providecommand{\urlprefix}{URL }
\providecommand{\doi}[1]{https://doi.org/#1}

\bibitem{abel2021expressivity}
Abel, D., Dabney, W., Harutyunyan, A., Ho, M.K., Littman, M., Precup, D.,
  Singh, S.: On the expressivity of markov reward. Advances in Neural
  Information Processing Systems  \textbf{34} (2021)

\bibitem{abounadi2001learning}
Abounadi, J., Bertsekas, D., Borkar, V.S.: Learning algorithms for {M}arkov
  decision processes with average cost. SIAM Journal on Control and
  Optimization  \textbf{40}(3),  681--698 (2001)

\bibitem{aksaray2016q}
Aksaray, D., Jones, A., Kong, Z., Schwager, M., Belta, C.: Q-learning for
  robust satisfaction of signal temporal logic specifications. In: Conference
  on Decision and Control (CDC). pp. 6565--6570. IEEE (2016)

\bibitem{ashok2019pac}
Ashok, P., K{\v{r}}et{\'\i}nsk{\`y}, J., Weininger, M.: Pac statistical model
  checking for markov decision processes and stochastic games. In:
  International Conference on Computer Aided Verification. pp. 497--519.
  Springer (2019)

\bibitem{BaierAFK18}
Baier, C., de~Alfaro, L., Forejt, V., Kwiatkowska, M.: Model checking
  probabilistic systems. In: Handbook of Model Checking, pp. 963--999. Springer
  (2018)

\bibitem{bozkurt2020control}
Bozkurt, A.K., Wang, Y., Zavlanos, M.M., Pajic, M.: Control synthesis from
  linear temporal logic specifications using model-free reinforcement learning.
  In: 2020 IEEE International Conference on Robotics and Automation (ICRA). pp.
  10349--10355. IEEE (2020)

\bibitem{brafman2018ltlf}
Brafman, R., De~Giacomo, G., Patrizi, F.: Ltlf/ldlf non-markovian rewards. In:
  Proceedings of the AAAI Conference on Artificial Intelligence. vol.~32 (2018)

\bibitem{ijcai2019ltl}
Camacho, A., Toro~Icarte, R., Klassen, T.Q., Valenzano, R., McIlraith, S.A.:
  {LTL} and beyond: Formal languages for reward function specification in
  reinforcement learning. In: International Joint Conference on Artificial
  Intelligence. pp. 6065--6073 (7 2019)

\bibitem{daca2017faster}
Daca, P., Henzinger, T.A., K{\v{r}}et{\'\i}nsk{\`y}, J., Petrov, T.: Faster
  statistical model checking for unbounded temporal properties. ACM
  Transactions on Computational Logic (TOCL)  \textbf{18}(2),  1--25 (2017)

\bibitem{de2019foundations}
De~Giacomo, G., Iocchi, L., Favorito, M., Patrizi, F.: Foundations for
  restraining bolts: Reinforcement learning with ltlf/ldlf restraining
  specifications. In: Proceedings of the International Conference on Automated
  Planning and Scheduling. vol.~29, pp. 128--136 (2019)

\bibitem{fu2014probably}
Fu, J., Topcu, U.: Probably approximately correct {MDP} learning and control
  with temporal logic constraints. In: Robotics: Science and Systems (2014)

\bibitem{moritz2019}
Hahn, E.M., Perez, M., Schewe, S., Somenzi, F., Trivedi, A., Wojtczak, D.:
  Omega-regular objectives in model-free reinforcement learning. In: Tools and
  Algorithms for the Construction and Analysis of Systems. pp. 395--412 (2019)

\bibitem{atva}
Hahn, E.M., Perez, M., Schewe, S., Somenzi, F., Trivedi, A., Wojtczak, D.:
  Faithful and effective reward schemes for model-free reinforcement learning
  of omega-regular objectives. In: Hung, D.V., Sokolsky, O. (eds.) Automated
  Technology for Verification and Analysis. pp. 108--124. Springer
  International Publishing, Cham (2020)

\bibitem{hahn2020model}
Hahn, E.M., Perez, M., Schewe, S., Somenzi, F., Trivedi, A., Wojtczak, D.:
  Model-free reinforcement learning for stochastic parity games. In: 31st
  International Conference on Concurrency Theory (CONCUR 2020). Schloss
  Dagstuhl-Leibniz-Zentrum f{\"u}r Informatik (2020)

\bibitem{hahn2021model}
Hahn, E.M., Perez, M., Schewe, S., Somenzi, F., Trivedi, A., Wojtczak, D.:
  Model-free reinforcement learning for lexicographic omega-regular objectives.
  In: International Symposium on Formal Methods. pp. 142--159. Springer (2021)

\bibitem{hasanbeig2019}
Hasanbeig, M., Kantaros, Y., Abate, A., Kroening, D., Pappas, G.J., Lee, I.:
  Reinforcement learning for temporal logic control synthesis with
  probabilistic satisfaction guarantees. In: Conference on Decision and Control
  (CDC). pp. 5338--5343 (2019)

\bibitem{hasanbeig2018logically}
Hasanbeig, M., Abate, A., Kroening, D.: Logically-constrained reinforcement
  learning. arXiv preprint arXiv:1801.08099  (2018)

\bibitem{icarte2018using}
Icarte, R.T., Klassen, T., Valenzano, R., McIlraith, S.: Using reward machines
  for high-level task specification and decomposition in reinforcement
  learning. In: International Conference on Machine Learning. pp. 2107--2116.
  PMLR (2018)

\bibitem{icarte2020reward}
Icarte, R.T., Klassen, T.Q., Valenzano, R., McIlraith, S.A.: Reward machines:
  Exploiting reward function structure in reinforcement learning. arXiv
  preprint arXiv:2010.03950  (2020)

\bibitem{jiang2020temporallogicbased}
Jiang, Y., Bharadwaj, S., Wu, B., Shah, R., Topcu, U., Stone, P.:
  Temporal-logic-based reward shaping for continuing learning tasks (2020)

\bibitem{spectrl}
Jothimurugan, K., Alur, R., Bastani, O.: A composable specification language
  for reinforcement learning tasks. In: Advances in Neural Information
  Processing Systems. vol.~32, pp. 13041--13051 (2019)

\bibitem{jothimurugan2021compositional}
Jothimurugan, K., Bansal, S., Bastani, O., Alur, R.: Compositional
  reinforcement learning from logical specifications. In: Advances in Neural
  Information Processing Systems (2021)

\bibitem{jothimurugan2022specification}
Jothimurugan, K., Bansal, S., Bastani, O., Alur, R.: Specification-guided
  learning of nash equilibria with high social welfare  (2022)

\bibitem{kakade2003sample}
Kakade, S.M.: On the sample complexity of reinforcement learning. University of
  London, University College London (United Kingdom) (2003)

\bibitem{kearns2002near}
Kearns, M., Singh, S.: Near-optimal reinforcement learning in polynomial time.
  Machine learning  \textbf{49}(2),  209--232 (2002)

\bibitem{li2017reinforcement}
Li, X., Vasile, C.I., Belta, C.: Reinforcement learning with temporal logic
  rewards. In: IEEE/RSJ International Conference on Intelligent Robots and
  Systems (IROS). pp. 3834--3839. IEEE (2017)

\bibitem{littman2017environmentindependent}
Littman, M.L., Topcu, U., Fu, J., Isbell, C., Wen, M., MacGlashan, J.:
  Environment-independent task specifications via {GLTL} (2017)

\bibitem{littman2017environment}
Littman, M.L., Topcu, U., Fu, J., Isbell, C., Wen, M., MacGlashan, J.:
  Environment-independent task specifications via {GLTL}. arXiv preprint
  arXiv:1704.04341  (2017)

\bibitem{pnueli1977temporal}
Pnueli, A.: The temporal logic of programs. In: 18th Annual Symposium on
  Foundations of Computer Science. pp. 46--57. IEEE (1977)

\bibitem{sistla1985complexity}
Sistla, A.P., Clarke, E.M.: The complexity of propositional linear temporal
  logics. Journal of the ACM (JACM)  \textbf{32}(3),  733--749 (1985)

\bibitem{strehl2006pac}
Strehl, A.L., Li, L., Wiewiora, E., Langford, J., Littman, M.L.: {PAC}
  model-free reinforcement learning. In: Proceedings of the 23rd international
  conference on Machine learning. pp. 881--888 (2006)

\bibitem{watkins1992q}
Watkins, C.J., Dayan, P.: Q-learning. Machine learning  \textbf{8}(3-4),
  279--292 (1992)

\bibitem{ijcai2019-0557}
Xu, Z., Topcu, U.: Transfer of temporal logic formulas in reinforcement
  learning. In: International Joint Conference on Artificial Intelligence. pp.
  4010--4018 (7 2019)

\bibitem{yang2021reinforcement}
Yang, C., Littman, M., Carbin, M.: Reinforcement learning for general ltl
  objectives is intractable. arXiv preprint arXiv:2111.12679  (2021)

\bibitem{yuan2019modular}
Yuan, L.Z., Hasanbeig, M., Abate, A., Kroening, D.: Modular deep reinforcement
  learning with temporal logic specifications. arXiv preprint arXiv:1909.11591
  (2019)

\end{thebibliography}

\end{document}